\theoremstyle{definition}
\theoremstyle{remark}
\newif\ifcom
\newtheorem{theorem}{Theorem}
\newtheorem{lemma}[theorem]{Lemma}
  \providecommand\BibTeX{{%
    \normalfont B\kern-0.5em{\scshape i\kern-0.25em b}\kern-0.8em\TeX}}}
\newcommand\blfootnote[1]{%
  \begingroup
  \renewcommand\thefootnote{}\footnote{#1}%
  \addtocounter{footnote}{-1}%
  \endgroup
}
\begin{document}

\title{Optimizing Cyber Response Time on Temporal Active Directory Networks Using Decoys (Extended Version)}


\author{Huy Q. Ngo}
\affiliation{%
  \institution{The University of Adelaide}
  \streetaddress{Adelaide, Australia}
  \city{Adelaide}
  \country{Australia}}
\email{quanghuy.ngo@adelaide.edu.au}

\author{Mingyu Guo}
\affiliation{%
  \institution{The University of Adelaide}
  \streetaddress{Adelaide, Australia}
  \city{Adelaide}
  \country{Australia}}
\email{mingyu.guo@adelaide.edu.au}

\author{Hung X. Nguyen}
\affiliation{%
  \institution{The University of Adelaide}
  \streetaddress{Adelaide, Australia}
  \city{Adelaide}
  \country{Australia}}
\email{hung.nguyen@adelaide.edu.au}

\renewcommand{\shortauthors}{Ngo, et al.}

\begin{abstract}
Microsoft Active Directory (AD) is the default security management system for Window domain network. We study the problem of placing decoys in AD network to detect potential attacks. We model the problem as a Stackelberg game between an attacker and a defender on AD attack graphs where the defender employs a set of decoys to detect the attacker on their way to Domain Admin (DA). Contrary to previous works, we consider time-varying (temporal) attack graphs. We proposed a novel metric called response time, to measure the effectiveness of our decoy placement in temporal attack graphs. Response time is defined as the duration from the moment attackers trigger the first decoy to when they compromise the DA. Our goal is to maximize the defender's response time to the worst-case attack paths. We establish the NP-hard nature of the defender's optimization problem, leading us to develop Evolutionary Diversity Optimization (EDO) algorithms. EDO algorithms identify diverse sets of high-quality solutions for the optimization problem. Despite the polynomial nature of the fitness function, it proves experimentally slow for larger graphs. To enhance scalability, we proposed an algorithm that exploits the static nature of AD infrastructure in the temporal setting. Then, we introduce tailored repair operations, ensuring the convergence to better results while maintaining scalability for larger graphs.
\end{abstract}

\begin{CCSXML}
<ccs2012>
   <concept>
       <concept_id>10002978.10003014</concept_id>
       <concept_desc>Security and privacy~Network security</concept_desc>
       <concept_significance>500</concept_significance>
       </concept>
   <concept>
       <concept_id>10010147.10010178.10010205.10010209</concept_id>
       <concept_desc>Computing methodologies~Randomized search</concept_desc>
       <concept_significance>500</concept_significance>
       </concept>
 </ccs2012>
\end{CCSXML}

\ccsdesc[500]{Security and privacy~Network security}
\ccsdesc[500]{Computing methodologies~Randomized search}

\keywords{Active Directory, Network Security, Decoy Placement, Evolutionary Diversity Optimization, Stackelberg Game}


\maketitle

\section{Introduction} 
\blfootnote{The manuscript have been accepted as full paper at The Genetic and Evolutionary Computation Conference (GECCO) 2024}

Active Directory is Microsoft's identity and access management system designed for Windows domain networks. It's widely adopted and plays a critical role in the networks of many enterprises and government bodies. However, its popularity has also made it a prime target for many cyber adversaries over the years. According to a report from Microsoft ~\cite{stat2}, there has been an alarming surge in attacks targeting AD users, with 30 billion attempted password attacks on AD accounts reported each month in 2023.

An Active Directory network naturally describes an attack graph, with \textbf{nodes} representing both physical and virtual entities such as users, computers, security groups, etc., and \textbf{directed edge} $(i, j)$ representing the vulnerability and accesses that an attacker can exploit to gain access from node $i$ to node $j$. BloodHound \footnote{https://github.com/BloodHoundAD/BloodHound} is one of the most influential tools for analysing/visualising the AD attack graph. BloodHound models the \textit{identity snowball attack}, a concept initially introduced by Dunagan et al.~\cite{dunagan2009heat}. The identity snowball attack models the sequence of attack in the network allowing them to gain access of higher privileges nodes from a low privilege node (ex. Account A $\xrightarrow[]{\text{AdminTo}}$ Computer B $\xrightarrow[]{\text{HasSession}}$ Account C).  However, Dunagan et al. \cite{dunagan2009heat} and several other works on defending Active Directory network \cite{guo2022practical, guo2023scalable, goel2022defending, goel2023evolving, zhang2023oracle} over-simply the problem with the assumption that AD network is static. 
In practice, the AD graph is very dynamic which will effect the security landscape overtime. One of the major sources of changes in the AD graphs is caused by users’ activities. In Windows systems, user authentication leaves behind credential material, typically in the form of a hash or clear-text password in the computer's memory. Adversaries can exploit this vulnerability, harvesting credentials for lateral movement. In the BloodHound, this vulnerability is presented as 'HasSession'. HasSession edges will stay online until being removed from the graph when the user signs off from the computer after a period of time. In this work, we formally model the dynamics of the AD graph using the \textbf{temporal attack graph}, wherein attackers gain access to nodes in the AD graph through the \textit{identity snowball attack}, presented as \textit{temporal paths}. For example, in Figure \ref{fig:ADex}, the identity snowball attack in temporal graph for gaining access of account $U_3$ from compromised node $s_2$ can be the following temporal path: $s_2$ $\xrightarrow[]{\text{1}}$ $Cp_1$ $\xrightarrow[]{\text{2}}$ $Cp_3$ $\xrightarrow[]{\text{4}}$ $U_3$ where number on each arrow is time the attacker exploit the edge to gain the access to next node. The static attack graph can not capture this attack path if generated at time steps $t \in [1,4) \cup (6, 10]$

\begin{figure}[h]
  \includegraphics[width=1\linewidth]{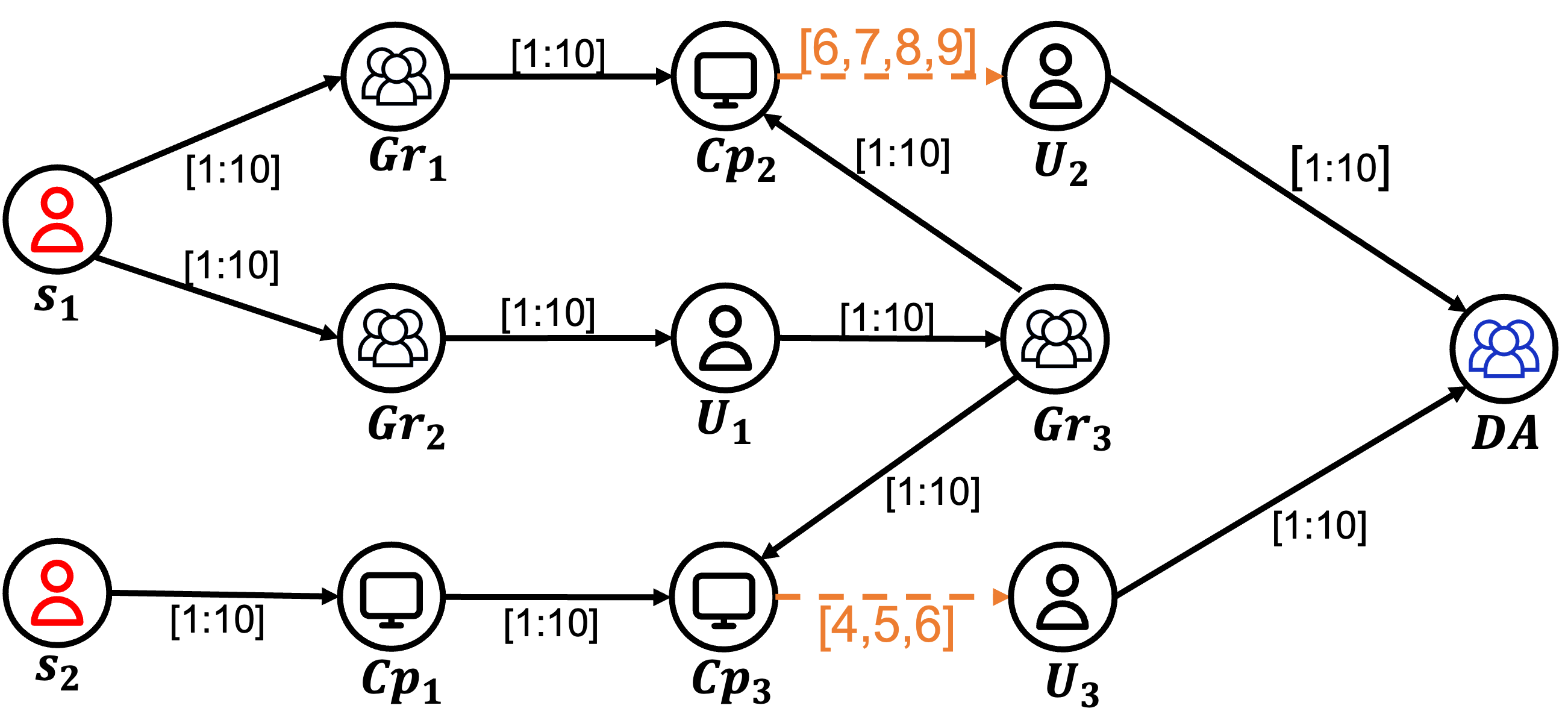}
  \caption{Example of an Active Directory graph sampled over a period of 10 time units. The timestamps on each edge indicates its appearance time. Black labels represent static edges, while orange labels denote dynamic edges (HasSession).}
  \label{fig:ADex}
\end{figure}

In this paper, we study a method for defending temporal AD attack graph by using active defense with cyber decoys. Decoys or honeypots \cite{decoy1, decoy2} are fake assets such as fake users, and fake hosts that trigger an alert when attackers engage them. They are designed to lure attackers to them by mirroring authentic assets. By allocating decoys in strategic locations, they can serve as the sentinel for the early detection of the cyber threat. An early detection of a threat can increase the effectiveness of incident response process of IT admin and reduce the further damage caused by the attack~\cite{stat1}. Motivated by the early detection use case of the network decoy, we proposed a problem for decoy allocation in temporal network called $maxRT$. In this problem, defender aims to optimize the \textbf{response time} of the decoy to any attack path. The response time is defined as the duration from the moment attacker triggers the first decoy to when they can reach DA. Defender want to maximize the response time to ensure the early detection, providing IT admin with sufficient time to respond to the incident before the attacker reaches the DA. 

We model our problem as a two-player Stackelberg game model with a pure strategy. In our game, the defender (leader) wants to allocate at most $b$ decoys on a set of \textit{blockable} nodes. The defender's allocation intercepts the attacker's temporal attack path while maximizing the response time of the allocation to ensure early detection. The attacker (follower) has access to a set of compromised entry nodes. The attacker can also observe the entire temporal AD graph and the defensive strategy. This assumption is based on the practicality of attackers employing reconnaissance tools similar to Sharphound \footnote{https://github.com/BloodHoundAD/SharpHound} to collect data from domain controllers and build an AD attack graph. The attacker’s strategy specifies an entry node, and from it, a temporal attack path to DA. The attacker's best response is to choose a temporal attack path that has the minimum response time. We will later show that the attacker's optimal plan can be found in polynomial time.

\textbf{Our Contribution.} This paper aims to propose a solution for the decoys allocation in Active Directory problem. We first prove $\mathcal{NP}$-hard nature of the defender's combinatorial optimization problem. We then introduce the Evolutionary Diversity algorithm as a heuristic solver. However, the vanilla EDO algorithm does not scale well for our problem when it fails to converge to any feasible solution (response time > 0) in some graphs. When we mention "vanilla" EDO algorithm, we refer to directly applying the EDO implementation from Goel et al. \cite{goel2022defending, goel2023evolving} to our problem. In an attempt to run the vanilla EDO algorithm on the ADX10 graph in our experiment, the response time of the solution remains 0 even after 2 million iterations (equivalent to almost 3 days of computational effort).
To enhance our algorithm, we propose several improvements. Firstly, the computation of the attacker's optimal path relies on the earliest-arrival path, which is computationally slow in AD graphs. We observe an contradiction that the state-of-the-art algorithm for computing the earliest-arrival path becomes highly inefficient in temporal graphs with a large number of static edges. Despite the dynamic characteristics of the AD graph, a significant portion of the AD infrastructure remains static. To address this problem in AD-specific graphs, we present a novel Dijkstra-based algorithm for computing the earliest-arrival path which significantly improves the run-time of the fitness function.
Secondly, we introduce two constraint-handling techniques to tackle the difficulty of finding feasible solutions in the vanilla EDO. The first method introduces a repair mechanism using Integer Linear Program (ILP) to directly patch the infeasible solution every round. The second approach introduces the surrogate/penalty fitness function. The surrogate function is a lightweight fitness function that replaces the computationally expensive real fitness function, allowing the evaluation of individuals at a lower cost during each iteration. The surrogate function is designed to evaluate a solution on a set of "important" attack paths instead of the whole graph and penalize the infeasible individuals. We experimentally verify that our proposal effectively improves the scalability of the EDO algorithms on our problem.

\section{Model Description}


\subsection{Background}
\textbf{Temporal directed graph} define as $G = (V, E_1,\cdots, E_{t_{max}}) = (V, E = (E_i)_{i\in [t_{max}]})$) where $V$ is a set of vertices in the graph and $E_i$ is the set of edges at time $i$. We denote the tuple $(u, v, t) \in E_t$ the edge from $u$ to $v$ appears at time $t$. For the sake of model simplicity, we assume that every edge has a duration of 1. In other words, if an attacker traverse edge $(u, v, t)$, they will start from node $u$ at time $t$ and arrive $v$ at time $t + 1$. Despite this simplification, all our algorithms remain effective in more general settings in which the duration of every edge is larger or equal 1. We call $t_{max}$ the lifetime of the graph. 
We also define the \textbf{underlying graph} of graph $G$ as $G_{\downarrow} = (V, E_{\downarrow})$ where $E_{\downarrow} = \bigcup_{t=1}^{t_{max}}E_t$. For the ease of demonstration in the paper, we also denote $time\_label(u, v) = (t_i)_{i=1}^k$ as a (ascending) sorted list of time units that edge $(u, v)$ appears (or is on) if $(u, v, t_i) \in E_i$ where $t_{i} \in time\_label(u, v)$. Otherwise, we say edge $(u, v)$  disappears or is off at time step $t_i$ if $t_i \notin time\_label(u, v)$. We denote a set of \textbf{static edges} as $E_s$, we say an edge $(u, v) \in E_s$ is static if they appear in every time step throughout the graph's lifetime or $|time\_label(u, v)| = t_{max}$. Similarly, we denote set of \textbf{dynamic edges} as $E_d$, we say an edge $(u, v) \in E_d$ is dynamic if they disappear from the graph for some of the time units or $|time\_label(u, v)| < t_{max}$.

\textbf{Temporal $(s, d)$-path} is defined as a sequence of edges in graph $G$ exhibiting a monotonic increase in edge labels. For any two distinct nodes $s, d \in V$, a temporal path between two vertices $s$ and $d$ is represented by the sequence of edges: $\pi$ = $\pi(s, d) = \langle(s = v_0, v_1, t_1), (v_1, v_2, t_2), \dots, (v_{k-1}, v_k = d, t_k) \rangle = \langle(v_{i-1}, v_i, t_i )\rangle_{i=1}^{k}$ where $v_i \neq v_j$ and $t_i < t_j$ for all $i, j \in \{0, \dots, k\}$ with $i \neq j$.  
We denote $start(\pi) = t_1$ and $end(\pi) = t_k + 1$ as the \textbf{starting time} and \textbf{ending time} of a path $\pi(s, d)$. We further denote by $dur(\pi(s, d)) = end(\pi(s, d)) - start(\pi(s, d))$ the duration of travelling from the starting vertex to the ending vertex of the path $\pi(s, d)$. 
Next, we define a set of every possible temporal path from $s$ to $d$ between interval $[t_{\alpha}, t_{\omega}]$ as $\Pi(s, d, [t_{\alpha}, t_{\omega}]) = \{\pi:\pi \text{ is a (s, d)-temporal path such} $ $ \text{that} start(\pi) \geq t_{\alpha}, end(\pi) \leq t_{\omega} \}$. Then, a path $p \in \Pi(s, d, [t_{\alpha}, t_{\omega}])$ is an \textbf{earliest-arrival path} if $end(\pi) = min\{ end(\pi'): \pi' \in \Pi(s, d, [t_{\alpha}, t_{\omega}])\}$.


\textbf{Temporal $(s, d)$-cut}, also known as a temporal $(s, d)$-separator, refers to the set of nodes $C(s, d)$ in the graph $G$ that the removal of every node in set $C(s, d)$ will disconnects all temporal paths from $s$ to $d$. \textit{It is essential to note that in this paper, the terms "cut" or "separator" specifically refers to the allocation of decoy on vertices}. When we employ a temporal $(s, d)$-cut the graph, we guarantee every (s, d) temporal path has a contact with the cut $C$.
Given a path $\pi(v_0, v_k) = \langle(v_{i-1}, v_i, t_i)\rangle_{i=1}^{k}$ in graph $G$ and a cut $C(v_0, v_k)$, we define a node $v$ as the \textbf{first point of contact} between the path $\pi(v_0, v_k)$ and the cut $C(v_0, v_k)$ if $v \in I : \forall u \in I, \text{dur}(v_0, v) \leq \text{dur}(v_0, u)$ where $I = V(\pi(v_0, v_k)) \cap C(v_0, v_k)$. In plain language, the first point of contact represents the first honeypot encountered when following the path. 

\textbf{Response time} denoted as $RT$ is a key parameter introduced in this paper for our specific problem. 
The response time of a path $\pi$ is defined as the duration between the moment when the attacker encounters or triggers the first honeypot and the time when the attacker compromises the Domain Admin while following path $\pi$.  
Let's us consider the temporal path $\pi(s, DA) = \langle(s = v_0, v_1, t_1), \dots, (v_{k-1}, v_k = DA, t_k) \rangle = \langle(v_{i-1}, v_i, t_i )\rangle_{i=1}^{k}$ with $v_x$ where $1 < x < k$ is the first point of contact of $\pi(s, DA)$ and the defense solution $C(s, DA)$. The response time of path $\pi(s, DA)$ is defined as $RT(\pi, C) = dur(\pi(s, DA)) - dur(\pi(s, v_x)) = t_k - t_{x} $
As a defender, we want to maximize the response time of every temporal path in the attack graph to let IT admin have enough time to react to the incident. 

\textit{Example 2.1.} Figure \ref{fig:ADex} illustrates a temporal Active Directory graph. The graph includes two compromised users, denoted as $s_1$ and $s_2$. The graph consists of two sets of edges: static edges, allowing the attacker to move between nodes at every time step, and dynamic edges, which appear for a limited time. In this example, we assume the defender allocates honeypots to a set of nodes $C = \{Cp_2, Cp_3\}$.
Consider the following temporal path $\pi = \langle(s_1, Gr_1, 2), (Gr_1, Cp_2, 4),$ $ (Cp_2, U_2, 6), (U_2, DA, 7) \rangle$ from $S$ to $DA$. Assuming the attacker from $s_1$ chooses this path, the honeypot on node $Cp_2$ is triggered at time 4 (as the attacker steps on it), alerting the IT admin to the attacker's presence. In this context, node $Cp_2$ is considered as the first point of contact for the attacker. The response time, defined as the time from honeypot alert to the attacker compromising the DA, is $RT = dur(\pi(s_1, DA)) - dur(\pi(s_1, Cp_2)) = (7+1-2) - (4+1-2) = 3$ units (plus 1 due to the assumption that traversing each edge takes 1 time unit). During this window, the IT admin has 3 time units to isolate compromised systems and terminate the attacker's unauthorized session. Note that the proposed response time is a realistic model of real hackers' behaviour where they would wait in the system for a long time before an opportunity arises for the next movement.

\subsection{Problem formulation}

\textbf{Temporal directed attack graph}
We define an AD attack graph in our model as a Temporal directed graph $G = (V, E_1,\cdots, E_{t_{max}})$. Set of vertices V represents all physical and virtual entities such as user, computer, security group, etc. The set of edge $E_i$ denotes the link modelling the security dependency and relationships between entities which represent vulnerabilities for attacker to make lateral movements.
There is a set $S\subseteq V$ of initial footholds called entry vertices, and the attacker has already compromised these vertices at the start of the attack. 
The attack goal is to compromise the Domain Admin (DA), the attacker can laterally move through the network using any of the \textbf{temporal (s, DA)-path}.

\textbf{Formulation with game theory} The problem of defending a temporal AD network with honeypots can be modelled as a Stackelberg game.
In our proposed model, the defender can deploy a set of honeypots on a set of vertices $C$ (a cut) such that form a temporal $(S, DA)$-cut. In our model, each honeypot will "monitor" any malicious activities on their allocated vertices. The honeypots will set an alert to IT admin once the attacker steps on one of these vertices. Defender can only allocate honeypot on a set of blockable vertices, denoted by $N_b \subseteq V$. 
In consideration of a worst-case scenario, we assume the attacker has full visibility into the temporal graph and the honeypot placements. The attacker can bypass these honeypots if the honeypot's placement does not form a temporal $(s, DA)$-cut, which in this case, the response time is 0. 
Consequently, when the budget of the defender problem is exactly the size of the minimum temporal cut, our problem's solution is also the solution for the minimum temporal $(s, d)$-separator problem \cite{zschoche2020complexity} which is known to be a $\mathcal{NP}$-complete problem. However, our problem goes beyond this by also maximizing the response time of the temporal cut which tends to "push" the solution further away from the DA. Generally speaking, nodes further away from the DA tend to be lower privilege nodes instead of servers or admin. Therefore, our solution incurs lesser disruption to the network.
We say $C$ is a defender's \textbf{feasible solution} if $C$ is strictly a temporal $(S, DA)$-cut, otherwise, it is a \textbf{infeasible solution}. Strategically, when facing a defence solution $C$, the attacker selects a path that minimizes the response time. The \textbf{attacker optimal attack} path can be found via $\min_{\pi\in \Pi} RT(\pi, C)$ where $\Pi$ is the set of every possible temporal path between each vertex $s \in S$ to DA. In contrast, the defender aims to find a cut $C$ that maximize the response time. The \textbf{defender's objective} is formulated as

\begin{equation}
\max_{C\subseteq N_{b}, |C|<b}\{\min_{\pi\in \Pi} RT(\pi, C)\}.
\end{equation}

\begin{theorem} Defender's problem is $\mathcal{NP}$-hard.
\label{theorem:np}
\end{theorem}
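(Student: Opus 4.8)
The plan is to prove $\mathcal{NP}$-hardness of the decision version of the defender's problem --- given a temporal attack graph $G$, a blockable set $N_b$, a budget $b$, and a threshold $\theta$, decide whether there is a $C \subseteq N_b$ with $|C| < b$ and $\min_{\pi \in \Pi} RT(\pi, C) \ge \theta$ --- by a polynomial-time reduction from the Temporal $(s,d)$-Separator problem, which is $\mathcal{NP}$-complete~\cite{zschoche2020complexity}. This is exactly the connection already flagged in the model description: when the budget matches the minimum temporal cut size, a feasible decoy placement is precisely a temporal $(S,DA)$-separator, so the separator problem should embed into ours with essentially no extra work.

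Concretely, given a Temporal $(s,d)$-Separator instance $(H, s, d, k)$, I would form a defender instance on the same temporal graph $G = H$ with $S = \{s\}$, $DA = d$, blockable set $N_b = V(H) \setminus \{s, d\}$, budget $b = k+1$ (so that the strict constraint $|C| < b$ is equivalent to $|C| \le k$), and threshold $\theta = 1$; this is clearly computable in polynomial time. The correctness claim is the equivalence: the defender's optimal value is strictly positive if and only if $H$ has a temporal $(s,d)$-separator of size at most $k$. For one direction, if $C$ is such a separator, then $C \subseteq N_b$, $|C| \le k < b$, and $C$ is a feasible solution; since $s, d \notin C$, on any temporal $(s,d)$-path the first point of contact is an internal vertex $v_x$ that occurs strictly before $DA$, so by strict monotonicity of the time labels $RT(\pi, C) = t_{\mathrm{end}} - t_x \ge 1$, giving $\min_\pi RT(\pi, C) \ge 1 = \theta$. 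For the converse, if $H$ has no separator of size $\le k$, then every admissible $C$ fails to be a temporal $(S,DA)$-cut, so (as the model notes) the attacker has a path avoiding $C$ entirely with $RT = 0$; hence $\min_\pi RT(\pi, C) = 0$ for all admissible $C$ and the optimal value is $0$. Since the decision problem is $\mathcal{NP}$-hard, the optimization problem (1) is $\mathcal{NP}$-hard.

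The main point to get right --- and essentially the only subtlety --- is the bookkeeping that ties ``feasible cut of size $\le k$'' to ``response time strictly positive'': one must verify that a decoy set which is a genuine temporal $(S,DA)$-cut always yields $RT \ge 1$ on every attack path (using that the first point of contact is an internal vertex, never $DA$ itself, together with strict monotonicity of edge labels and unit edge durations), while a non-cut always admits an attack path with $RT = 0$. One should also note that the temporal-path variant used here (strict labels, unit durations) matches a variant of the separator problem that remains $\mathcal{NP}$-complete --- the hardness of~\cite{zschoche2020complexity} covers the strict case --- so no additional gadget engineering is needed; the hardness is inherited simply by restricting to the single-source, all-internal-vertices-blockable, threshold-one special case.
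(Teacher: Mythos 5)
Your reduction is correct and proves the theorem, but it takes a noticeably more minimal route than the paper. You reduce from the strict temporal $(s,d)$-separator problem of \cite{zschoche2020complexity} by taking the instance graph essentially unchanged ($S=\{s\}$, $DA=d$, all internal vertices blockable, $b=k+1$) and observing that the optimum of the defender's objective is strictly positive iff a separator of size at most $k$ exists: since $s$ and $DA$ are never in $C$, the first point of contact on any intercepted path is an internal vertex $v_x$ with $x<k$, so strict monotonicity of the time labels gives $RT=t_k-t_x\ge 1$, while a non-cut leaves a path with $RT=0$ by definition of the fitness function. This is the rigorous version of the observation the paper itself makes informally in Section 2.2, and it is all that is needed for $\mathcal{NP}$-hardness of the optimization problem. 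The paper instead wraps the separator instance in an explicit gadget: two entry nodes $s_1,s_2$, a bypass path $s_2\to y_1\to DA$, a bottleneck vertex $y_2$ appended after $d$, a shift of all time labels by $t_\alpha$, and budget $|minC_{ts}|+1$, so that the defender can always form \emph{some} temporal cut cheaply near $DA$ but can only achieve a large response time by locating a minimum separator deep inside the embedded instance. That construction aims to pin the hardness on the response-time maximization itself rather than on the mere existence of a small cut, at the cost of extra bookkeeping that the paper leaves largely implicit; your version buys brevity and a self-contained correctness argument. Both proofs rest on the same source problem, and your explicit check that the strict-label variant is the one whose hardness is inherited is the right subtlety to flag.
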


\textit{Due to space constraints and anonymity, we omit the proofs of every theorems in this submission. Detailed proofs will be provided in the extended technical report, supplementing the main manuscript.}

\section{Related Work}

\textbf{Identity Snowball Attack in dynamic environment}. In the literature, there are several efforts to model the identity snowball attacks with consideration of the dynamic nature of the attack graph. Ngo et al. \cite{ngo2023catch} also study the honeypot/decoys allocation on Active Directory network with the consideration of the dynamic setting. However, their approach to modelling dynamism is somewhat simplistic. They capture the dynamic nature by taking independent static snapshots of the attack graph at each time step, treating each snapshot as an attacker's scenario in a static graph. Their allocation strategy jointly optimizes the number of attack paths in each snapshot. Ngo et al. \cite{ngo2023catch} fail to model the identity snowball attack in the temporal graph. In practical scenarios, attackers can patiently "lurk" in a node until a more opportune path emerges. This characteristic makes our model more sophisticated and practical than theirs. Albanese et al. \cite{albanese2022formal} attempted to model the credential hopping attacks/identity snowball attacks on the time-varying user-computer graph. They assume that attacker does no observation on the network topology and employ a heuristic algorithm to find the upper-bound of the attacker attack effort. In contrast, our work considers the worst scenario where attacker have the observation on the attack graph and we can derive the optimal attack response. Pope et al. \cite{pope2018automated} also consider a similar model to Albanse et al. except they employ genetic programming to predict the attacker success rate/effort. We highlight that none of these works considers the temporal graph for modelling the dynamic of AD graph. 

\textbf{Active Directory.} In the literature, two primary defender strategies have been explored for defending Active Directory: edge-blocking and decoy allocation (node-blocking). The seminal work by Dunagan et al. \cite{dunagan2009heat} was the first to study the defense problem in Active Directory through edge-blocking by introducing the heuristic edge-blocking algorithm. Follow-up researches on the edge-blocking optimization problem includes Guo et al. \cite{guo2022practical} proposed an optimal edge-blocking strategy using Fixed-Parameter Tractable algorithms; \cite{guo2023scalable, zhang2023oracle} improved scalability through Mixed-Integer Programming and the Double Oracle algorithm; Goel et al. \cite{goel2022defending, goel2023evolving} proposed the Evolutionary Diversity Optimization (EDO) algorithm to defend against attackers in a configurable environment; and Guo et al.\cite{guo2023limited} studied optimal edge-blocking problem with minimal human input. Another approach for defending Active Directory found in the literature involves node-blocking, which abstracts the concept to decoy allocation. Ngo et al. \cite{ngo2023catch} are the first to study the honeypot allocation problem for defending Active directory where they proposed MIP algorithm to solve the problem.

\textbf{Evolutionary Diversity Optimization} \cite{ulrich2010integrating} is a recent branch of Evolutionary Computation. EDO is designed to identify a set of solutions that is both high-quality and structurally diverse. In the literature, there have been considerable efforts exploring the EDO algorithm for various combinatorial problems, including the travelling salesperson problem \cite{nikfarjam2021entropy, do2022analysis, bossek2019evolving}, minimum spanning tree problem \cite{bossek2021evolutionary}, knapsack problems \cite{bossek2021breeding}, and more. Among these studies, the work of Goel et al. \cite{goel2022defending, goel2023evolving} is particularly relevant to our research. Goel et al. consider the edge-blocking problem against attacker in AD graph where edges are associated with a failure rate and detection rate. They deploy a neural network/reinforcement learning to approximate the attacker's strategy and apply EDO algorithm to solve the defender problem. In our study, our EDO algorithm draws inspiration from Goel et al. \cite{goel2022defending}, including the design of the mutation/crossover operator and diversity measure strategy. However, our experimental findings reveal that the vanilla EDO algorithm performs poorly when directly applied to our specific problem.

\section{Proposed methodology}

\subsection{Game-theoretical rational attacker}
\label{sec:optatk}
In our model, the game-theoretical rational/optimal attacker will choose the attack path that has the minimal response time. We illustrate such paths using the following example from Figure \ref{fig:ADex}. We assume the defender allocates honeypots to a set of nodes $C = {Cp_2, Cp_3}$. 
Starting from the entry node $s_1$, let's examine two potential attack paths: $\pi_1 = \langle(s_1, Gr_1, 1), (Gr_1, Cp_2, 2),(Cp_2, U_2, 6), (U_2, DA, 7) \rangle$ and $\pi_2 = \langle(s_1, Gr_1, 1), (Gr_1, Cp_2, 5),(Cp_2, U_2, 6), (U_2, DA, 7) \rangle$. The difference between these 2 paths lies in the departure time of exploiting the second edge $(Gr_1, Cp_2)$. After exploiting the first edge $(s_1, Gr_1)$ at time 1, the attacker has 2 options: either immediately exploit the next edge at time 2 ($\pi_1$) or wait until time 5 to continue ($\pi_2$). Despite both paths leading to the attacker reaching DA at time 7, the attacker is more "troublesome" if they opt for $\pi_1$. This is because the decoy only identifies them at time 5 ($RT = 2$) for path $\pi_1$, whereas for path $\pi_2$, the attacker is detected at time 2 ($RT = 5$), providing the defender with significantly more time to respond to the incident. $\pi_1$ in this example is actually the worst-case/optimal attack path.


Algorithm \ref{alg:optatk} for finding such paths can be described as follows. Let's consider an attack graph $G$ and a defender's honeypot allocation $C \in V$. We define a tuple $(\pi_1, c, t_c)$, where $\pi_1$ represents a temporal path, $c$ is a node in $C$, and $t_c$ is a time. Firstly, for each node $c \in C$, we verify if it is reachable from any of the entry nodes $s\in S$ at time $t_c$ in a graph $G' = ((V\setminus C)\cup c, E)$ (line 4) —here, we remove all nodes in $C$ except node $c$ (line 2). The condition in line 4 ensures the \textit{worst-case} condition of the optimal attack path. If we can reach node $c$ from $S$ at time $t_c$ using path $\pi_1$, we then find the earliest-arrival path $\pi_2$ from $c$ to DA within the interval $[t_c, t_{\omega}]$ (line 5-6). We add the tuple $(\pi_1, \pi_2)$ to $\Psi$ (line 7). Next, for every tuple $(\pi_1, \pi_2) \in \Psi$, we merge 2 path to form a temporal $(s,DA)$-path $\pi = \pi_1 + \pi_2$. We identify the tuple with the smallest duration $dur(\pi_2)$, \textit{the duration of the earliest-arrival path $\pi_2$ is actually the response time for the attack path $\pi = \pi_1 + \pi_2$} (line 8). Therefore, the optimal attack path $\pi_{OPT}$ is the one where the $\pi_2$ sub-path has the smallest duration. The fitness function giving a defender solution $C$ can be defined as: 

\begin{equation}
\label{equa:fitness}
  f(C)=\begin{cases}
    \min_{\pi \in \Pi} RT(\pi, C), & \text{if $C$ is feasible (temporal cut)}.\\
    0, & \text{otherwise}.
  \end{cases}
\end{equation}

\begin{algorithm}[H]
 \caption{Algorithm for Computing Optimal Attack Path}
 \label{alg:optatk}
 \begin{algorithmic}[1]
 \renewcommand{\algorithmicrequire}{\textbf{Input:}}
 \REQUIRE Temporal graph $G$, set of source nodes $S$, set of honeypot $C$, destination node DA, time interval $[t_{\alpha}, t_{\omega}]$\\
 \renewcommand{\algorithmicrequire}{\textbf{Output:}}
 \REQUIRE Optimal attack path $\pi$\\
 \STATE \textbf{foreach} $c \in C$ \textbf{do}
 \STATE \quad Remove nodeset $C \setminus c$ from graph $G$
 \STATE \quad \textbf{foreach} $t \in [t_{\alpha}, t_{\omega}]$ \textbf{do}
 \STATE \quad \quad \textbf{if} $c$ can be reached from any $s\in S$ at time $t$ \textbf{do}
 \STATE \quad \quad \quad Store the path used to reach c by time $t$ to $\pi_1$ 
 \STATE \quad \quad \quad $\pi_2 \leftarrow compute\_earliest\_arrvl\_path(G, c, DA, [t, t_{\omega}])$
 \STATE \quad \quad \quad Add $(\pi_1, \pi_2)$ to $\Psi$
 \STATE $\pi = \arg\min_{(\pi_1, \pi_2) \in \Psi} \text{dur}(\pi_2)$
 \STATE \textbf{return} $\pi$
 \end{algorithmic}
\end{algorithm}

For computing earliest-arrival path subroutine (line 6) we can use the state-of-the-art algorithm proposed by Wu et al. \cite{wu2014path} which has been proven to be time-polynomial. This makes computing attacker optimal attack plan time polynomial. Despite this, Wu's algorithm is inefficient when running on AD-specific graph, slowing down the computation of the optimal attack plan. We will discuss this issue in the next section and propose a more efficient approach for calculating the earliest-arrival path.

\subsection{Faster computation for earliest-arrival path}

As outlined in Section \ref{sec:optatk}, the response time of an attack path is determined by the duration of the earliest-arrival path from an initial point of contact to the DA. Therefore, the computation of optimal attack for fitness function required the call of computing the earliest arrival path subroutine. The first candidate algorithm that we use for computing the earliest arrival path in our implementation is Wu's algorithm \cite{wu2014path}. In \cite{wu2014path}, the author explored the computation of minimal paths in temporal graphs, including the earliest-arrival path. Wu et al. introduced a one-pass algorithm for computing the earliest-arrival path, which stands as one of the state-of-the-art algorithms for this task. Wu's algorithm generates a set of edge streams, a chronological sequence of all edges $E$ ordered by the time at which the edge is collected. The algorithm scans through the edge stream, greedily updating the earliest arrival time at each node that satisfies the arrival condition. This process required the duplication of every static edge to correctly update the earliest arrival time which explain the contradictory of the inefficiencies of Wu's algorithm in AD-specific temporal graph. In general, Wu's algorithm poses inefficiencies when applied to graphs with a substantial number of static edges as Wu's algorithm requires the scan of every edge in $E$. In practice, while the AD graph exhibits dynamic characteristics, a significant portion of the AD infrastructure remains static. For instance, in a snapshot taken from the University of Anonymous on 13/10/2021 at 02:00 pm, a total of 1,151,962 relationships (edges) were identified as online at that time while only 4,039 of these edges were the HasSession edges, which are deemed as the primary source contributing to the dynamism of the AD graph.

Our proposed approach utilises the Dijkstra's edge scanning strategy which allows us to perform the scan only on the underlying edges $E_{\downarrow}$. The intuition behind this algorithm lies in using the Dijkstra greedy scanning strategy, which scans through each underlying edge only once to expand the earliest-arrival paths. The pseudocode is given in Algorithm \ref{alg:dijkstraea}. The idea of using Dijkstra for finding earliest-arrival path has been proposed in \cite{xuan2003computing}. However, we further enhance the runtime on graphs with numerous static edges by introducing a conditional statement between lines 11-14 in Algorithm \ref{alg:dijkstraea}

The correctness of the Dijkstra Greedy Strategy for computing the earliest-arrival path is provided in Theorem \ref{theo:dijkstra}. In the general case, the time complexity of Wu et al.'s algorithm can be expressed in our notation as $\mathcal{O}((\varepsilon_{s} + \varepsilon_{d}) \cdot t_{max})$, whereas the time complexity of our proposed algorithm is $\mathcal{O}((\varepsilon_{s} + \varepsilon_{d} \cdot t_{max})\log{}(|V|))$. In scenarios where the number of static edges $\varepsilon_{s} = |E_s|$ outweighs the number of dynamic edges $\varepsilon_{d} = |E_d|$, our algorithm demonstrates more efficient runtime, as theoretically presented in Theorem \ref{theo:staticea}.

Experimentally, when we use these algorithms to find the earliest path from every source to every node in graph $ADX10$ (section ), while Wu's algorithm takes $18.370$ seconds to complete the task, Dijkstra Greedy's runtime is only about $3.389$ seconds (5x faster).  

\begin{theorem}
\label{theo:dijkstra}
Algorithm \ref{alg:dijkstraea} correctly compute the earliest-arrival path from a source vertex $x$ to every vertex $v \in V$ within a given interval $[t_{\alpha}, t_{\omega}]$ with the complexity of $\mathcal{O}((\varepsilon_{s} + \varepsilon_{d} \cdot t_{max}) \cdot \log{}(|V|))$
\end{theorem}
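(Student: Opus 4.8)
The plan is to prove the two assertions separately: the correctness of the Dijkstra-style scan, and then the running-time bound, by charging all work to the underlying edges $E_{\downarrow}$. What makes a greedy ``finalise the vertex with smallest tentative value'' strategy sound here is a monotonicity property that plays the role of non-negativity of edge weights in ordinary Dijkstra: along any temporal path the arrival time is \emph{strictly} increasing. Indeed, since every edge has duration $1$, traversing $(v_{i-1},v_i,t_i)$ means leaving $v_{i-1}$ at some time $t_i \ge \mathrm{arrival}(v_{i-1})$ and reaching $v_i$ at $t_i+1 > \mathrm{arrival}(v_{i-1})$; starting from $\mathrm{arrival}(x)=t_\alpha$ one gets $\mathrm{arrival}(v_i)=t_i+1$, strictly increasing, and $end(P)=\mathrm{arrival}$ of the last vertex. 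An immediate consequence is a \emph{domination} property: arriving at a vertex as early as possible never restricts what can be done on its outgoing edges, since an earlier arrival admits a (weakly) larger set of usable time labels on each such edge. Hence it suffices to track a single value per vertex, and no vertex (in particular no endpoint of a dynamic edge) ever needs to be re-expanded.

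\textbf{Correctness by induction on extraction order.} I would then show, by induction on the order in which vertices are popped from the priority queue, that when a vertex $u$ is finalised with tentative value $a_u$, this $a_u$ is exactly the earliest arrival time of $u$ over temporal $(x,u)$-paths contained in $[t_\alpha,t_\omega]$. The base case is the source $x$ with value $t_\alpha$. For the inductive step, fix any admissible $(x,u)$-path $P$ and let $w$ be its first vertex that is not finalised before $u$; its predecessor $w^-$ on $P$ was finalised strictly earlier, so the relaxation of edge $(w^-,w)$ carried out at that moment assigned $w$ a tentative value no larger than the arrival time of $w$ along $P$, which by the monotonicity above is at most $end(P)$. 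Since $u$ is popped before $w$, the priority-queue discipline gives $a_u \le (\text{tentative value of }w) \le end(P)$. As $P$ was arbitrary, $a_u$ lower-bounds the ending time of every admissible path to $u$; and since the predecessor pointers reconstruct a concrete temporal $(x,u)$-path whose ending time is $a_u$ and which respects $[t_\alpha,t_\omega]$ (the relaxation rule only ever uses departure times inside the window), $a_u$ is attained. Therefore $a_u$ is the earliest-arrival time and the recorded path is an earliest-arrival path.

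\textbf{Running time.} Each underlying edge $(u,v)\in E_{\downarrow}$ is relaxed exactly once, when $u$ is finalised, and each relaxation that improves a tentative value triggers one queue insertion costing $O(\log|V|)$; there are $O(\varepsilon_s+\varepsilon_d)$ relaxations, hence $O((\varepsilon_s+\varepsilon_d)\log|V|)$ total queue work (extract-min included, with lazy deletion of stale entries). The cost of a single relaxation is where the conditional of lines~11--14 matters: if $(u,v)\in E_s$, the edge is online at every step, so its earliest usable departure is simply the current arrival time of $u$ and the relaxation is $O(1)$; if $(u,v)\in E_d$, we scan the sorted list $time\_label(u,v)$, of length at most $t_{max}$, for the first label not below the arrival time of $u$, costing $O(t_{max})$. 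Summing,
\begin{equation*}
\mathcal{O}\!\Big(\,(\varepsilon_s+\varepsilon_d)\log|V| + \sum_{(u,v)\in E_s} 1 + \sum_{(u,v)\in E_d} t_{max}\,\Big) = \mathcal{O}\big((\varepsilon_s+\varepsilon_d\cdot t_{max})\log|V|\big),
\end{equation*}
using that the number of vertices reachable from $x$ is $O(\varepsilon_s+\varepsilon_d)$.

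\textbf{Main obstacle.} The delicate part is the structural fact above: one must argue carefully that an earlier arrival is never disadvantageous downstream, so that a single greedy value per vertex is correct and no dynamic edge is ever re-relaxed (which would otherwise reintroduce a $t_{max}$ blow-up), and that the interval endpoints $t_\alpha,t_\omega$ are enforced consistently both by the relaxation rule and in the family of competitor paths used in the induction. Granting these, the remainder is the standard Dijkstra correctness argument together with the amortised-cost accounting.
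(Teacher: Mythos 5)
Your proof is correct and reaches the same complexity bound by the same accounting (each underlying edge relaxed once; $O(1)$ per static edge, $O(t_{max})$ per dynamic edge, $O(\log|V|)$ per queue operation), but your correctness argument takes a genuinely different and in fact more self-contained route than the paper's. The paper justifies the greedy strategy by invoking a prefix-subpath optimality lemma (every prefix of an earliest-arrival path is itself an earliest-arrival path, cited from Wu et al.), i.e.\ the optimal-substructure half of the classical Dijkstra argument, and stops there. You instead prove the exchange/greedy-choice half explicitly: you establish that arrival times are strictly increasing along any temporal path (the analogue of non-negative edge weights) and that an earlier arrival at a vertex weakly dominates a later one with respect to every outgoing edge's usable time labels, and then run the standard induction on extraction order to show the popped value is exact. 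This is arguably the more complete argument --- optimal substructure alone does not justify finalising the vertex with minimum tentative arrival time, and the domination property is also precisely what licenses keeping a single value per vertex and never re-relaxing a dynamic edge, which your runtime bound silently depends on. The paper's version is shorter because it outsources the substructure lemma to Wu et al.\ and treats the rest as ``standard Dijkstra''; yours makes the temporal-graph-specific hypotheses (strict monotonicity, domination, window consistency) visible. One small caveat: your final step absorbs the $O(|V|\log|V|)$ extract-min cost by asserting that the number of reachable vertices is $O(\varepsilon_s+\varepsilon_d)$; this is fine for vertices with at least one incident underlying edge (the only ones that ever enter the queue besides the source), and is no looser than the paper's own handling of the same term.
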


\begin{theorem}
\label{theo:staticea}
When $\varepsilon_{s} \gg \varepsilon_{d}$, the complexity of Dijkstra-based algorithm become $\mathcal{O}(\varepsilon_{s} \cdot \log{}(|V|)$ while complexity of Wu's algorithm become $\mathcal{O}(\varepsilon_{s} \cdot t_{max})$
\end{theorem}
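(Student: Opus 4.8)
The plan is to obtain this statement as a direct asymptotic simplification of bounds already in hand: the $\mathcal{O}((\varepsilon_{s} + \varepsilon_{d} \cdot t_{max}) \cdot \log(|V|))$ running time of Algorithm~\ref{alg:dijkstraea} established in Theorem~\ref{theo:dijkstra}, and the $\mathcal{O}((\varepsilon_{s} + \varepsilon_{d}) \cdot t_{max})$ running time of Wu et al.'s one-pass algorithm quoted earlier. No new algorithmic analysis is needed; the work is to make the regime $\varepsilon_{s} \gg \varepsilon_{d}$ precise and then substitute.

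First I would fix the convention that $\varepsilon_{s} \gg \varepsilon_{d}$ means $\varepsilon_{d} = o(\varepsilon_{s})$, and --- crucially for the Dijkstra bound --- that the static edges also dominate the inflated dynamic term, i.e. $\varepsilon_{d} \cdot t_{max} = O(\varepsilon_{s})$, which holds comfortably in the AD snapshots motivating the paper (static count on the order of $10^{6}$, dynamic count on the order of $10^{3}$, with moderate $t_{max}$). Under this convention, $\varepsilon_{s} + \varepsilon_{d} \cdot t_{max} = \Theta(\varepsilon_{s})$, so Theorem~\ref{theo:dijkstra} collapses to $\mathcal{O}(\varepsilon_{s} \cdot \log(|V|))$; and $\varepsilon_{s} + \varepsilon_{d} = \Theta(\varepsilon_{s})$, so Wu's bound collapses to $\mathcal{O}(\varepsilon_{s} \cdot t_{max})$. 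Together these are exactly the claimed pair, and as a by-product one reads off that the Dijkstra-based algorithm is asymptotically faster precisely when $\log(|V|) = o(t_{max})$, the typical AD setting.

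The main obstacle is not depth but quantifier hygiene: $\varepsilon_{s} \gg \varepsilon_{d}$ by itself does not bound $\varepsilon_{d} \cdot t_{max}$, so the proof must either strengthen the hypothesis to $\varepsilon_{s} \gg \varepsilon_{d} \cdot t_{max}$ or add a standing assumption that $t_{max}$ grows slowly relative to $\varepsilon_{s}/\varepsilon_{d}$. I would state the chosen convention explicitly at the top of the proof so the subsequent big-$\mathcal{O}$ manipulations are unambiguous, and otherwise the argument is a one-line substitution into each of the two bounds.
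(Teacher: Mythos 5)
Your proposal is correct and follows essentially the same route as the paper's own proof: both simply substitute the regime $\varepsilon_{s} \gg \varepsilon_{d}$ into the two previously stated bounds, $\mathcal{O}((\varepsilon_{s} + \varepsilon_{d} \cdot t_{max}) \cdot \log(|V|))$ for the Dijkstra-based algorithm and $\mathcal{O}((\varepsilon_{s} + \varepsilon_{d}) \cdot t_{max})$ for Wu's algorithm, and simplify. In fact you are more careful than the paper, which handles the issue you flag (that dominating $\varepsilon_{d}$ does not by itself dominate $\varepsilon_{d} \cdot t_{max}$) only implicitly by formally taking the limit $\varepsilon_{d} \to 0$, whereas your explicit strengthening to $\varepsilon_{d} \cdot t_{max} = O(\varepsilon_{s})$ makes the needed hypothesis precise.
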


\begin{algorithm}[H]
 \caption{Dijkstra-based algorithm for Computing Earliest-Arrival Time}
 \label{alg:dijkstraea}
 \begin{algorithmic}[1]
 \renewcommand{\algorithmicrequire}{\textbf{Input:}}
 \REQUIRE Temporal Graph $G$, source nodes $S$, time interval $[t_{\alpha}, t_{\omega}]$\\
 \renewcommand{\algorithmicrequire}{\textbf{Output:}}
 \REQUIRE The earliest-arrival time from every source nodes $s \in S$ to every vertex  \\
 \STATE $PQ = Priority\_Queue$                                                                          \\
 \STATE $INSERT_{PQ}(t_{i}, s), \forall s \in S$                                                        \\
 \STATE $seen[s] = t_{i}, \forall s \in S$                                                              \\
 \STATE $arrvl\_time \leftarrow empty$ $dictionary$                                                   \\
 \STATE \textbf{while} $PQ \neq \emptyset$ \textbf{do}                                                  \\
 \STATE \quad $(t_u, u) \leftarrow POP\_MIN_{PQ}()$                                                 \\
 \STATE \quad \textbf{if} $u$ in $arrvl\_time$ \textbf{do}                                            \\
 \STATE \quad \quad continue                                                                            \\
 \STATE \quad $arrvl\_time[u] = t_u$                                                                  \\
 \STATE \quad \textbf{for} successor $v$ of $u$ \textbf{do}                                       \\
 \STATE \quad \quad \textbf{if} $(u, v)$ is a $static$ $edge$ \textbf{then}                         \\
 \STATE \quad \quad \quad $v\_arrvl \leftarrow t_u + 1$                                               \\
 \STATE \quad \quad \textbf{else} \textbf{then}                                                          \\
 \STATE \quad \quad \quad $v\_arrvl \leftarrow min\{t : t \in time\_labels(u, v),$ $and$ $t > t_u \}$
 \STATE \quad \quad \textbf{if} $v$ in $arrvl\_time$ \textbf{then}
 \STATE \quad \quad \quad \textbf{continue}
 \STATE \quad \quad \textbf{elseif} $v$ not in $seen$ or $v\_arrvl < seen[v]$ \textbf{do}
 \STATE \quad \quad \quad $seen[v] \leftarrow v\_arrival$
 \STATE \quad \quad \quad $INSERT_{PQ}(v\_arrvl, v)$
 \STATE \textbf{return} $arrvl\_time$
 \end{algorithmic}
\end{algorithm}

\subsection{EDO Algorithm for max-$RT$}

In this section, we discuss the application of the Evolutionary Diversity Optimization (EDO) algorithm within our problem context. The pioneering work of Goel et al. \cite{goel2022defending, goel2023evolving} introduced the EDO technique for addressing the edge-blocking problem in AD attack graphs. We initially applied Goel's EDO algorithm to our scenario. In our problem, the defender employs EDO to acquire a diverse set of defensive plans denoted as $C$, where the fitness function $f(C)$ can be obtained by computing the optimal attack plan. Let's define $P$ as the population of defensive solutions. An individual $p\in P$ is defined as the binarization of solution $C$ where each individual has a length of $|N_b|$, with 1 signifying the decision to block the corresponding node and 0 implying no blocking.

We initiate the process by generating a random population \(P\) of defensive solutions. An individual $p$ is randomly selected from $P$ to undergo either mutation or crossover, each with a probability of 0.5. The number $x$ of mutated bits in the offspring is chosen randomly based on a Poisson distribution. For \textbf{mutation}, we randomly select an individual $p'$ from $P$ and flip $x$ random bits, changing 0s to 1s and 1s to 0s. For example, if we choose $p' = \langle 1, 0, 1, 1, 0, 1 \rangle$ from $P$ and $x = 2$, the resulting offspring could be $p = \langle 0, 1, 0, 1, 1, 1 \rangle$. For \textbf{crossover}, we again randomly select two parents $p'$ and $p''$ from $P$. We identify $x$ coordinates where $p'$ has 0s and $p''$ has 1s, and flip the bits at those coordinates on both $p'$ and $p''$. Similarly, we identify $x$ coordinates where $p'$ has 1s and $p''$ has 0s, and flip the bits at those coordinates. After having the offspring using mutation and crossover operation, we add the new offspring to the population only if their fitness score is close to the best fitness score of the population and reject the individuals that contribute the least to the diversity of the population. We follow the \textbf{diversity measure} of population implementation of \cite{goel2022defending,goel2023evolving}. But to summarise our diversity measure aims to maximise the diversity of "unique" nodes in the population. Let $Cnt_P(v_i)$  be the function that counts the number of individuals in population $P$ that contain $v_i$. We say that $v_i$ is more "unique" to the population if they have a lower $Cnt_P(v_i)$ score. \textit{Again, we noted that this paper is not intended to redesign mutation, crossover operations, or diversity measures. Instead, our focus lies in the design of an algorithm aimed at enhancing the overall runtime and the convergence time to feasible solutions.}

Our preliminary investigation of the EDO algorithm revealed that most of generated offspring solution are infeasible. This challenge arises due to the expansive nature of the defender solution space (${|N_b| \choose b}$ combinations), which makes it difficult to generate feasible solutions using conventional evolution operators alone. 
Another challenge with the vanilla EDO algorithm is its requirement to execute the full fitness function. Although we have presented that the fitness function can be computed in polynomial time and pushed the runtime frontier by proposing a modification of Dijkstra-based for computing earliest-arrival paths. The algorithm execution time remains slow for larger graphs. In the following section, we will explore two constraint-handling techniques that we propose to enhance convergence to feasible solutions and improve the algorithm's runtime efficiency.

\subsection{Constraint-Handling Evolutionary Algorithm}
In this section, we shall introduce two constraint-handling approaches for our EDO algorithm.
\subsubsection{Integer Programming repair operator}
In this proposal, we aim to address the issue of infeasible offspring directly by employing a problem-specific \textit{repair} operator. The repair mechanism involves solving an Integer-Programming (IP) to "patch" the cutting solution. The complete algorithm can be describe as following. Suppose we encounter an infeasible offspring, denoted as $p$ after the mutation or crossover. For each blocked node, excluding those that have undergone a state change during the mutation or crossover, we probabilistically unblock them (i.e., change 1s to 0s) with a probability of 1/2. The purpose of this unblocking operation is to reserve additional space for the subsequent repair process and fulfill the cardinality condition. Finally, we solve our problem-specific IP repair operator. \textit{Due to space constraints, we will provide the detailed ILP formulation in the extended technical report, supplementing the main manuscript}. The ILP is formulated on the idea that if any node $i$ is connected to $j$ via edge $(i, j, t)$, and if $j$ can reach DA at any time before $t$, then $i$ can also reach DA at every time after $t+1$.

While the repair operator ensures convergence to a feasible solution in each iteration, it is worth noting that this approach is very memory costly. The IP requires $\mathcal{O}(|V|\cdot t_{max})$ variable and upto $\mathcal{O}(\varepsilon \cdot t_{max} + |V|)$ constraints, which can become exponentially large for certain graphs. Additionally, solving the IP itself is known to be a $\mathcal{NP}$-hard problem. 

\subsubsection{Surrogate-assisted and penalty-based repair operator}

Throughout our experiment, we observed that employing the full fitness function on the entire graph in each iteration proves to be very costly, especially for large graphs. Additionally, when using the vanilla EDO algorithm, we encountered difficulties as the solution failed to converge towards feasibility. 

To tackle the challenges mentioned earlier, we propose Algorithm \ref{alg:edo}. The core concept behind Algorithm \ref{alg:edo} is to evaluate the population on a lightweight surrogate fitness function in every iteration instead of the inefficient complete fitness function (\ref{equa:fitness}). The complete fitness function required to run Algorithm \ref{alg:optatk} on the whole graph. Our idea for design is that we only need to focus on a set of "important" paths that are likely to have the most impact on the evaluation, instead of inefficiently spending time on the entire graph. We will have two separate sets of populations in our algorithm namely global population $P_{global}$ and local population $P_{local}$. The local population is evaluated every iteration by the surrogate fitness, while the global population is only evaluated by the complete fitness function when a specific condition is met. Let $\Phi$ be the set of "important" temporal $(s, DA)$-path for the surrogate function. We initialize the set $\Phi$ by adding a random set of temporal paths in graph. Then we iteratively improve the function by adding to $\Phi$ the most up-to-date optimal attack path by the attacker when facing the current population. Our experimental results demonstrate that the surrogate function eventually becomes as effective as the complete fitness function. The proposed algorithm is designed to guide the solution towards convergence of the feasible solution. The pseudocode of the algorithm is presented in Algorithm \ref{alg:edo}. It involved the call of 3 other subroutines: 

\textit{Local Search (line 5):} In the local search, the algorithm performs the standard mutation or crossover, diversity measure and rejection. The key difference is that instead of using a resource-intensive fitness function, we employ a lightweight surrogate fitness function for evaluation. We say an individual $p$ is a \textbf{locally feasible} solution if $p$ can intercept all paths in $\Phi$. Individuals failed to block any paths in $\Phi$ will be penalized. The penalty score is determined by the number of paths in $\Phi$ that an individual $p$ cannot block. The \textbf{surrogate fitness function} can be presented as follows:

\begin{equation}
  f_s^{\phi}(C)=\begin{cases}
    \min_{\pi \in \Phi} RT(\pi, C), & \text{if $C$ is locally feasible}.\\
    -|\{ \pi \in \Phi : \pi \cap C = \emptyset\}|, & \text{otherwise}.
  \end{cases}
\end{equation}

\textit{Global Search (line 7):}  We define that global search starts only when there are no locally infeasible individuals in the local population, and a specified number of local iterations have been completed. In the Global Search, the algorithm adds each "candidate" individual from the local population to the global population and employs diversity measures and rejection on the global population. We use the complete fitness function to evaluate each individual. \textit{It's important to note that a solution $C$ is locally feasible may not necessarily be globally feasible}. This concern arises because the local search evaluates only a fraction of the graph ($\Phi$), which may not provide enough samples to form a cut in the graph. However, as stated in Theorem \ref{theo:converge}, we establish that eventually, the locally feasible solution yields the globally feasible solution after a certain number of iterations.

\textit{Update the Surrogate Fitness Function (line 8 - 12):} Following every global search, we improve the surrogate function by updating the important path set $\Phi$. The update is based on the performance of each individual in the local population. For every $p\in P_{local}$ that is globally infeasible, we add some random temporal (s, DA)-path to $\Psi$ in graph $G'=(V\setminus p, E)$ after removing nodes in cut set $p$. Those are the paths that make the individual $p$ globally infeasible. We use the modification of the Depth First Search algorithm for temporal graphs to find the random paths. For every $p\in P_{local}$ that is globally feasible, we improve the surrogate function by adding the optimal attack path when facing the defense solution $p$ to $\Phi$.

\begin{theorem}
\label{theo:converge}
In Algorithm \ref{alg:edo}, the number of iterations of Global Search until feasible solution $C$ on local evaluation function $f_{s}^{\phi}(C)$ yield feasible solution on the global evaluation function is $\mathcal{O}(|V|)$ iterations at worst. 
\end{theorem}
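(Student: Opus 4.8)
The plan is to recast Theorem~\ref{theo:converge} as a combinatorial statement about the growth of the important-path set $\Phi$ and to drive it with a bounded, monotone potential. First note that ``$C$ is locally feasible'' means exactly that $C$ is a hitting set of the current $\Phi$ (it contacts every path in it), while ``$C$ is globally feasible'' means $C$ is a temporal $(S,DA)$-separator of $G$. Since $\Phi$ only grows over global-search rounds, the family of hitting sets shrinks, so the property ``every $C$ with $|C|<b$ that hits $\Phi$ is a temporal $(S,DA)$-cut of $G$'' is monotone: once it holds after some round it holds forever. Hence it suffices to bound the number of \emph{useful} global-search rounds, i.e.\ rounds in which there still exists such a $C$ that hits $\Phi$ but is not a temporal cut. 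In every useful round the surrogate-update step (lines 8--12 of Algorithm~\ref{alg:edo}) inserts into $\Phi$ a temporal $(S,DA)$-path of $G\setminus C$, which is by construction vertex-disjoint from $C$ and therefore a genuinely new constraint that $C$ violates.

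Second, I would attach to the accumulated path set $\Phi_k$ after round $k$ the subgraph $H_k$ of vertices and temporal edges used by its paths, and take as potential $\mu_k$ the minimum size of a temporal $(S,DA)$-separator of $H_k$ (equivalently, the minimum hitting-set size of $\Phi_k$). Adding edges never decreases this quantity, so $\mu_0\le\mu_1\le\cdots$; moreover every temporal $(S,DA)$-separator of $G$ separates $H_k\subseteq G$ as well, so $\mu_k\le\tau^\ast$, where $\tau^\ast$ is the minimum temporal $(S,DA)$-separator size of $G$, which is at most $|V|$ (and satisfies $\tau^\ast<b$ in the only regime where a feasible solution exists at all, making the statement non-vacuous). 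The heart of the argument is to show that a useful round forces $\mu_{k+1}\ge\mu_k+1$: one would take the witnessing set $C$ to be a \emph{minimum} hitting set of $\Phi_k$ (so $|C|=\mu_k<b$), or, if that particular $C$ happens already to be a cut, a minimum-cardinality hitting set that is \emph{not} a cut, which exists precisely because the monotone property above has failed. Since the path $\pi$ added to $\Phi$ lies in $G\setminus C$ it misses $C$, so $C$ ceases to be a hitting set of $\Phi_{k+1}$; a Menger-style exchange argument on $H_{k+1}$ then upgrades this to: every hitting set of $\Phi_{k+1}$ has size strictly larger than $\mu_k$. Chaining these increments bounds the number of useful rounds by $\tau^\ast\le|V|$, and after the last one the monotone property holds, giving the claimed $\mathcal{O}(|V|)$ bound; initialising $\Phi$ with random paths only helps.

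The main obstacle is exactly the strict increment $\mu_{k+1}\ge\mu_k+1$. In static graphs this is the augmenting-path fact behind Menger's theorem, but temporal $(S,DA)$-separators do \emph{not} enjoy Menger duality in general (Zschoche et al.), so one cannot just invoke max-flow/min-cut; the exchange has to be carried out directly on temporal reachability, using that $\pi$ is vertex-disjoint from a minimum hitting set of $\Phi_k$ and that every path of $\Phi_k$ meets that hitting set. A subtlety to be handled carefully is that removing one minimum hitting set $C$ is not a priori enough, since another minimum hitting set of $\Phi_k$ might already contact $\pi$; the fix is either to choose $\pi$ to avoid the ``core'' common to the relevant minimum hitting sets, or to replace $\mu_k$ by a lexicographic potential that also counts how many minimum hitting sets survive, and to argue that the latter drops whenever the former does not. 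A secondary point is that Algorithm~\ref{alg:edo} produces the witness path by a randomized temporal DFS and only prunes the current population's infeasible individuals, so the bound is best stated over the rounds in which the population actually exhibits a bad solution; one still needs that any $(S,DA)$-path of $G\setminus C$ serves the counting argument, which holds because it kills $C$ as a hitting set regardless of how it was chosen. I would finally verify the boundary cases $b\le\tau^\ast$ (statement vacuous) and an already cut-certifying $\Phi_0$ (zero useful rounds).
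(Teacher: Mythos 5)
Your overall strategy---bound the number of useful Global Search rounds by a monotone potential equal to the minimum hitting-set size of $\Phi$, which is capped by the maximum number of pairwise vertex-disjoint temporal $(S,DA)$-paths and hence by $|V|$---is in essence a more formal rendering of the paper's own argument. The paper observes that each path added in line 12 avoids the currently locally feasible $C$ (so it is a genuinely new constraint), and then argues that the worst case is the one in which all paths accumulated in $\Phi$ are pairwise vertex-disjoint, of which there can be at most $|V|-2$; non-disjoint additions are dismissed with the remark that blocking shared vertices demands less budget and therefore only accelerates convergence. Your potential $\mu_k$ makes that counting precise, and your boundary observations (vacuity when $b\le\tau^\ast$, a $\Phi_0$ that already certifies a cut) match the paper's framing.

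The gap is the one you yourself flag: the strict increment $\mu_{k+1}\ge\mu_k+1$ is never established, and it is exactly where all the difficulty of the theorem lives. Two concrete problems. First, the witness $C$ exhibited by the population is an arbitrary locally feasible individual with $|C|<b$, not a minimum hitting set of $\Phi_k$, so the path $\pi$ in $G\setminus C$ added to $\Phi$ only certifies that \emph{this} $C$ fails on $\Phi_{k+1}$, not that every hitting set of size $\mu_k$ does. Second, even if $C$ were a minimum hitting set, another minimum hitting set of $\Phi_k$ may already intersect $\pi$, in which case $\mu_{k+1}=\mu_k$; and since $\pi$ is produced by a randomized temporal DFS, the algorithm cannot be assumed to choose $\pi$ so as to avoid the common core of all minimum hitting sets. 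Your proposed repairs (core-avoiding paths, a lexicographic potential) point at the right place but are not carried out, and, as you correctly note, the Menger-type exchange that would close the second case is unavailable for temporal separators. To be fair, the paper's own proof does not close this step either---it asserts the disjoint-path worst case rather than proving that shared vertices cannot force additional rounds---so your proposal reproduces the paper's argument up to, but not beyond, the point at which it becomes informal.
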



\begin{algorithm}[H]
 \caption{EDO with surrogate-assisted/penalty-based fitness function}
 \label{alg:edo}
 \begin{algorithmic}[1]
 \renewcommand{\algorithmicrequire}{\textbf{Input:}}
 \REQUIRE Temporal Graph $G$, honeypot budget $b$
 \renewcommand{\algorithmicrequire}{\textbf{Output:}}
 \REQUIRE Blocking population $P$  \\
 \STATE Initialize local population $P_{local}$ \\
 \STATE Initialize global population $P_{global}$
 \STATE Initialise set of paths $\Phi$
 \STATE \textbf{while} A termination criterion is met \textbf{do}
 \STATE \quad  $P_{local}$ $\leftarrow$ $LocalSearch(P_{local}, \phi)$
 \STATE \quad \textbf{if} $\prod_{p\in P_{local}, \pi \in \Phi} |p \cup \pi| \neq 0$ and global criterion is met \textbf{do}
 \STATE \quad \quad $P_{global}$ $\leftarrow$ $GlobalSearch(P_{global}, P_{local})$
 \STATE \quad \quad \textbf{foreach} $p \in P_{local}$ \textbf{do}
 \STATE \quad \quad \quad \textbf{if} $p$ is a $(S, DA)-cut$ in $G$ \textbf{do}
 \STATE \quad \quad \quad \quad Compute $\pi_{opt}^{p}(G)$ and add to $\Phi$
 \STATE \quad \quad \quad \textbf{else}
 \STATE \quad \quad \quad \quad Add a random paths from $s\in S$ to $DA$ in graph $G'=(V \textbackslash p, E)$ to $\Phi$
 \STATE \textbf{return} $P_{global}$
 \end{algorithmic}
\end{algorithm}

\section{Experiment Result}
\label{sec:exp}

\subsection{Experiment Set Up}
All of the experiments are carried out on a high-performance computing cluster with 1 CPU and 24GB of RAM allocated to each trial. As the real-world AD graph is sensitive, we will conduct experiments on synthetic graph generated by DBCreator \footnote{https://github.com/BloodHoundAD/BloodHound-Tools/tree/master/DBCreator} and Adsimulator \footnote{https://github.com/nicolas-carolo/adsimulator} - two state of the art tools for creating AD graphs. Every graph starting with R ("Rxxx") is generated by DBCreator while the one starting with label AD ("ADxxx") is generated by the ADsimulator. 
DBCreator only allows us to fine-tune the number of computers and users. In contrast, Adsimulator provides greater flexibility by enabling adjustments to various entities in the AD graph, including Security Groups, Organizational Units (OUs), Group Policy Objects (GPOs), and more. Consequently, we have two types of graphs generated by Adsimulator: 'ADX\textbf{x}', where default parameters are increased by a factor of '\textbf{x}' (e.g., ADX10 is 10 times the default setting), and 'ADU\textbf{y}', mimicking '\textbf{y}' fractional proportions of the structure of the real AD network at the University of Anonymized (e.g., ADU05 represents 5 $\%$ of the mimicked network). Due to space constraints, detailed information about the size of each graph will be provided in the technical report. However, for a quick estimate, here are the sizes of the largest graph for each type: R4000 (12001 nodes and 45780 edges), ADX20 (6013 nodes and 26671 edges), and ADU (6875 nodes and 37292 edges).

\begin{table}[h]
\begin{center}
\caption{Comparison of all algorithms with DBCreator's graph. The results show the average response time (higher is better) and the average last improvement time (lower is better) of each setting. The numbers in the parenthesis are the average last improvement time.}\label{tab:dbresult}
\smallskip\noindent
\resizebox{\linewidth}{!}{%
\begin{tabular}{lllll}
\hline
               & \textbf{R2000+C}   & \textbf{R4000+C} & \textbf{R2000+L} & \textbf{R4000+L} \\
\hline
\textbf{VAN-V}     & 2.10 (53177s) & 3.60 (42445s) & 0              & 0                \\
\textbf{VAN-D}     & 2.03 (68268s) & 4.09 (28514s) & 0              & 0           \\
\textbf{ILP-V}     & 4.07 (16715s) & 4.49 (17037s) & 2.62 (40999s) & 3.18 (41826s)              \\
\textbf{ILP-D}     & 4.09 (25177s) & 4.58 (19067s) & 2.90 (43272s) & 2.90 (33305s)               \\
\textbf{EST-V}     & 4.17 (302s)   & 4.70 (706s)   & 4.50 (11862s)  & 3.70 (20536s)              \\
\textbf{EST-D}     & 4.17 (473s)   & 4.70 (302s)   & 4.60 (10257s)  & 3.70 (20536s)              \\
\hline
\end{tabular}}
\end{center}
\end{table}

\begin{table*}[ht]
\begin{center}
\caption{Comparison all algorithms with ADsimulator's graph. No feasible result found by VIN so we did not include it here. OOM is stand for Out-of-Memory. All notion in Table \ref{tab:dbresult} will be also applied here. }\label{tab:compgreedy}
\smallskip\noindent
\resizebox{\linewidth}{!}{%
\begin{tabular}{lllllllll}
\hline
                                    & \textbf{ADX5+C}   & \textbf{ADX10+C} &\textbf{ ADX20+C} & \textbf{ADU5+C} & \textbf{ADX5+L}   & \textbf{ADX10+L} &\textbf{ ADX20+L} & \textbf{ADU5+L} \\
\hline
\textbf{ILP-V}                      & 3.21 (31362s)        & OOM               & OOM                   &  OOM       & 3.50 (26796s)        & 0.83 (78836s)              & OOM                   &  OOM                                 \\
\textbf{ILP-D}                      & 3.30 (25250s)         & OOM                & OOM                   & OOM    & 3.40 (25228s)         & 0.60 (73151s)                & OOM                   & OOM                                    \\

\textbf{EST-V}                      & 3.30 (10517s)        & 3.50 (20498s)        & 2.50 (34762s)            & 1.60 (67432s)     & 5.27 (1965s)        & 1.05 (38057s)        & 1.4 (68776s)            & 1.8 (70165s)                                      \\
\textbf{EST-D}                      & 3.40 (2415s)         & 3.30 (14839s)        & 2.50 (34365s)            & 1.70 (65532s)      & 4.77 (3175s)         & 1.75 (23709s)        & 1.90 (74284s)         & 1.40 (73482s)                                     \\
\hline

\end{tabular}}
\end{center}
\end{table*}

However, these tools only generate static snapshots of the graph. To generate a temporal AD attack graph, we will merge a "mould" of static AD graph with authentication data which simulates the characteristic of Hassession edge. The first source is the authentication data from The Comprehensive, Multi-Source Cyber-Security Events dataset [7], referred to as \textbf{LANL}. The second source is from an anonymous organization, labelled as \textbf{COMP}. We will provide the details of each dataset in the appendix. Combining these datasets involved the following process. First, in each static mould AD graph, we removed all HasSession edges. Next, we randomly mapped users, computers and authentication events from the authentication data to the mould graph to create an instance of the temporal graph. For clarity in denoting the generated instances, we referred to a temporal graph in the format $\{graph\}+\{auth\_source\}$. For instance, \textbf{R2000+C} indicates a temporal graph derived from the mould static graph \textbf{R2000}, with Hassession edge data sourced from the \textbf{C}OMP authentication dataset. In this notation, L refers to the LANL dataset, and C refers to the COMP dataset.

We use Gurobi 9.0.2 solver for solving the ILP module. For each experiment instance, we ran 10 trials. In each trial, we randomly choose a set of 10 starting nodes and randomly re-map the authentication data to the mould graph. 
To define the defensive budget for our problem, we have to determine the size of the minimum temporal cut $|minC|$. We will discuss how we determine $minC$ in our appendix. Given that the condition $b \geq |minC|$ has to be met to ensure our problem is feasible, we define the budget for our problem as $b = b_f*|minC|$ where $b_f > 1$ is the budget factor. We set the budget factor to $b_f = 1.5$ for every experiment. We define that only 90 percent of nodes in the graph is blockable. To construct the HasSession edges, we captured snapshots of the authentication dataset every 1 hour. In the experiment, we considered a total of 1000 snapshots in each setting (about 40 days). To avoid confusion in metrics, we will use "time unit" as the metric for the response time. We generate a population of 10 defensive blocking plans. The termination condition for the evolution algorithms was set at 2,000,000 iterations or 24 hours, whichever came first.

In our experiment, we adopt specific denotation for clarity: the Integer Linear Programming approach is denoted as ILP, the surrogate-assisted approach as EST, and the vanilla EDO algorithm as VIN. Additionally, we introduce a \textbf{Value-based Evolutionary Computation (VEC)} which greedily rejects the worst individual from the population instead of rejecting individuals based on diversity measure. In total, we will have 6 sets of algorithm includes: Vanilla EDO algorithm (\textbf{VAN-D}), Vanilla VEC algorithm (\textbf{VAN-V}), ILP-repair approach with EDO framework (\textbf{ILP-D}), ILP-repair approach with VEC framework (\textbf{ILP-V}), Surrogate-assisted approach with EDO framework (\textbf{EST-D}), Surrogate-assisted approach with VEC framework (\textbf{EST-V}). 
Note that our vanilla EDO algorithm's fitness function is implemented with our Dijkstra-based algorithm for computing the earliest-arrival path. Results would significantly degrade if Wu's algorithm were employed.
 


\subsection{Result Interpretation}

\begin{figure}[h]
  \includegraphics[width=1\linewidth]{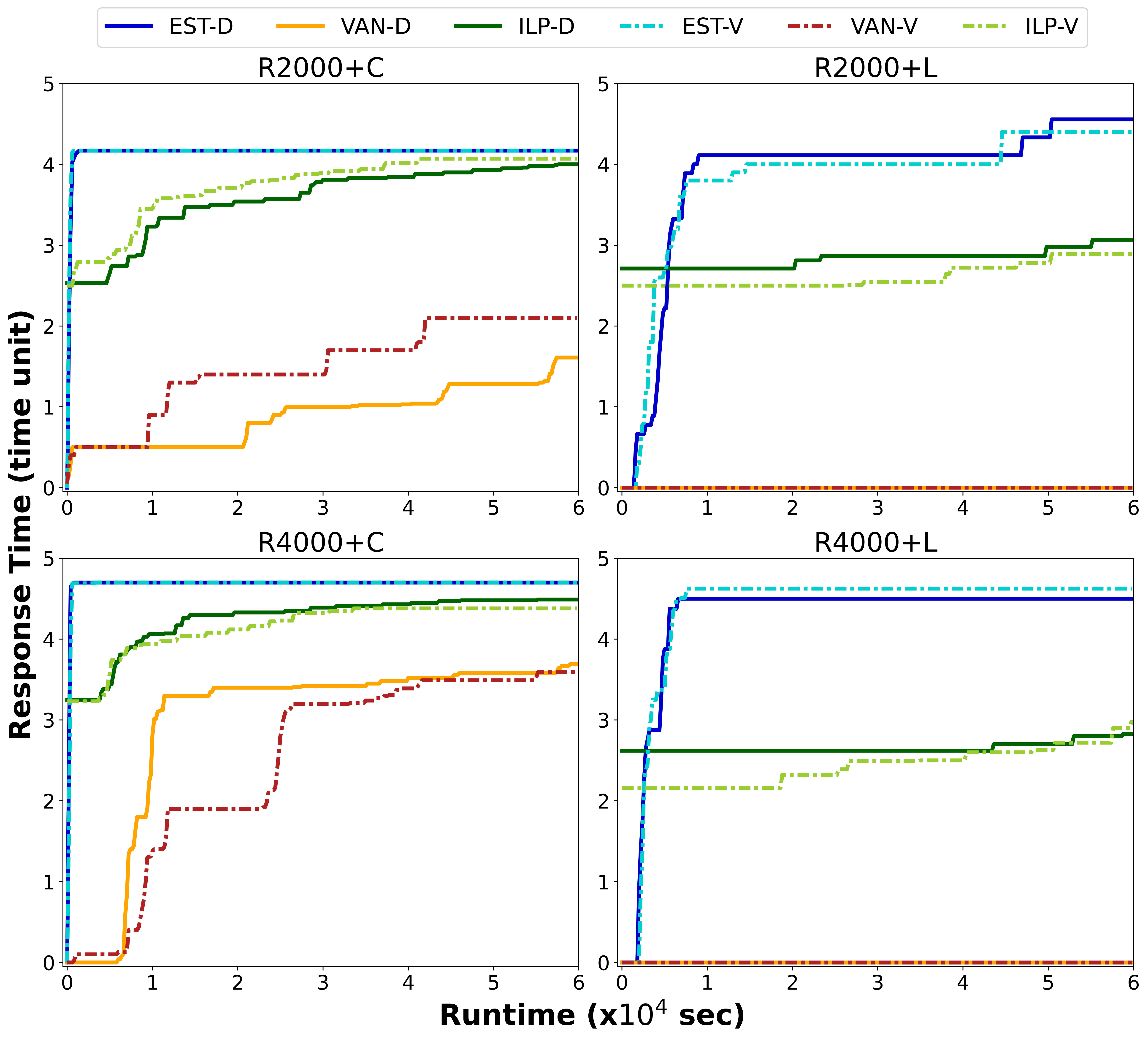}
  \caption{Performance comparison of all 6 algorithms. The \textbf{EST} approaches exhibit significantly faster convergence to the best result compared to the other methods.}
  \label{fig:converge}
\end{figure}

From Figure \ref{fig:converge}, the EST approach significantly improves the convergence speed of the Evolutionary Algorithm, allowing it to reach the best result much faster than ILP and VAN. Notably, ILP approach can find feasible solution from the early iteration since the repair operator will guarantee the mutation/crossover yields a feasible defensive solution. However, solving Integer Linear Programming itself is highly resource-intensive and is the bottleneck for this technique. Unfortunately, ILP failed to run in 5 out of 12 graphs due to Out-of-Memory errors.

Among the graphs, R2000+C and R4000+C are the only two where VAN can find any feasible solution. When we record the time to find the feasible solution (R2000+C and R4000+C), while vanilla takes about 21,728 seconds to find the feasible solution, EST takes on average 201 seconds which is about 108 times faster. For our setting, the EST method performs, on average, about $23\%$ better than the ILP. The convergence speed of EST is also superior to ILP.

To compare the performance of EDO-based algorithms (ended with D) with VEC-based algorithms (ended with V), we conducted a head-to-head comparison between these two approaches. Out of 21 comparable settings (excluding those with OOM errors and infeasible solutions), EDO outperformed VEC in 12 settings, while VEC performed better in only 9 cases (in instances where the response times were equal, we compared the average last improvement time). Overall, EDO outperformed VEC when applied to our problem. 

\section{Conclusion}

This paper investigated a Stackelberg game model between an attacker and a defender in temporal Active Directory attack graphs. We propose the use of Evolutionary Diversity Optimization algorithms to address this problem. However, the vanilla EDO encounters challenges when scaling to larger graphs and struggling to find feasible solutions.
To improve our solution, we first improve the computation of the attacker's optimal path (fitness function) by refining the calculation of the earliest-arrival path. Our novel Dijkstra-based algorithm for computing the earliest-arrival path, based on the observation that a significant portion of the AD infrastructure remains static. Experimentally, our algorithm is approximately 5 times faster than the SOTA algorithm when running on AD-specific graphs. Next, we introduce two constraint-handling techniques: a repair mechanism using Integer Linear Program (ILP) and a surrogate-assisted model with a penalty fitness function (EST). 
While ILP guarantees to find a feasible solution in early iterations, the EST method achieves this approximately 108 times faster than the vanilla approach. Moreover, EST outperforms ILP, demonstrating approximately a 23\% improvement in our specific setting.

\begin{acks}
\end{acks}

\bibliographystyle{ACM-Reference-Format}
\bibliography{main}
\appendix

\section{Appendix}

\subsection{Proof for Theorem \ref{theorem:np}}
\begin{proof} The proof is based on a reduction from the strict temporal $(s, d)$-seperator (strict-TS) problem which is $\mathcal{NP}$-complete \cite{zschoche2020complexity} for graph of lifetime $\geq 5$. 

\textbf{PROBLEM: } Strict-TS
\begin{itemize}
    \item \textbf{Input:} A temporal graph $G = (V, E_1,\cdots, E_{t_{max}})$, source node $s\in V$, destination $d\in V$ and $k \in \mathbb{N}$
    \item \textbf{Question:} Does $G$ admit a temporal $(s, d)$-seperator of size at most $k$
\end{itemize}
The proof gadget for the strict-TS is illustrated in Figure \ref{fig:NPgadget}.a and the complete proof is provided in \cite{zschoche2020complexity}. 

The high level idea for the hardness proof of max-$RT$ that the solution for the max-$RT$ problem can be found via solving the strict-TS problem. Let us define an instance of strictTS problem $G_{ts} = (V_{ts},  E_{ts,1},\cdots, E_{ts, t_{max}})$. We define a source node $s \in V_{ts}$ and destination node $d \in V_{ts}$. For the detailed construction of other nodes and edges in strict-TS, we refer the reader to Theorem 3.1 of \cite{zschoche2020complexity}. Let $minC_{ts}$ represent the solution to the strict-TS problem.

Subsequently, we construct the proof gadget for the max$RT$ problem (Figure \ref{theorem:np}.b) as follows. We introduce two entry nodes, $s_1$ and $s_2$. At time $t_\alpha$, node $s_1$ is connected to node $s$ of a sub-graph constructed following the strict-TS instance. At time $t_{\alpha}+6$, we also connect $d$ from the strict-TS subgraph to $y_2$. We delay every edges in the Strict-TS instance by $t_{\alpha}$. We assume that $s$ and $d$ is not blockable. Assuming $s$ and $d$ are not blockable, we finalize the instance by adding the following remaining edges: $(y_2, DA, t_{\alpha} + 7)$, $(s_2, y_1, t_{\alpha})$, and $(y_1, DA, t_{\omega})$ where $t_{\omega} \geq t_{\alpha}+7$. The full construction for max$RT$ can be seen in Figure \ref{theorem:np}.b.

With a defensive budget of $b = |minC_{ts}| + 1$, the optimal allocation involves locating the solution for the strict-TS instance and blocking vertices $y_2$. As the optimal solution of max$RT$ yield the optimal solution for Strict-TS, this implies that max$RT$ is $\mathcal{NP}$-hard.

\begin{figure}[h]
  \includegraphics[width=1\linewidth]{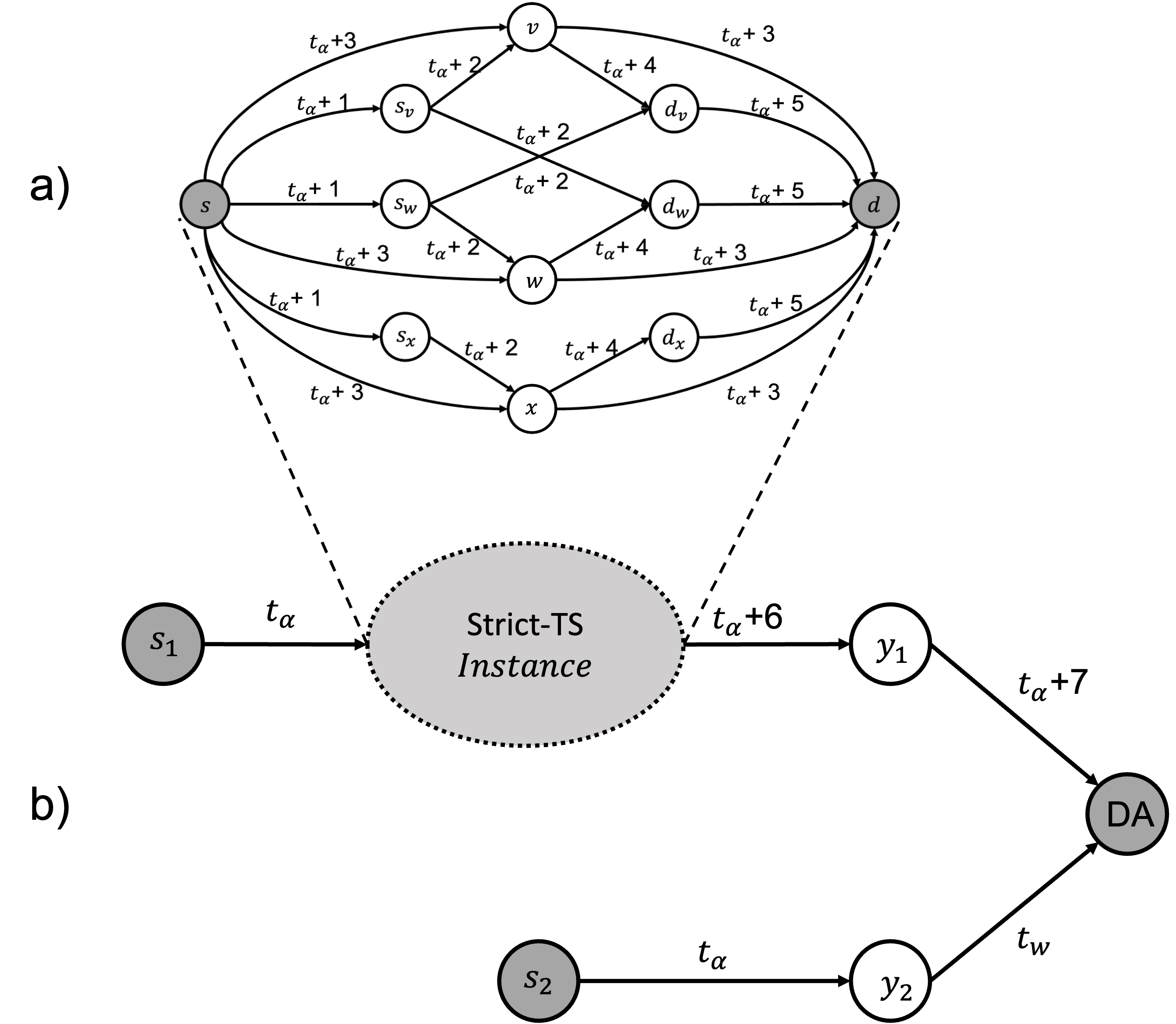}
  \caption{Proof gadget for Theorem \ref{theorem:np}. a) Proof gadget for Strict-TS problem. b) Proof gadget for max-$RT$ problem}
  \label{fig:NPgadget}
\end{figure}
\end{proof}

\subsection{Complete formulation for ILP repair operator}

We begin by introducing the key \textbf{variables}. Let $R_{i, t}$ be a binary variable representing the DA-reachability of node $i$. A value of 1 indicates that we can reach the DA from node $i$ when starting the journey at time $t$, while 0 indicates otherwise. Additionally, we define $B_i$ as a binary decision variable; a value of 1 mean we decide to block node $i$, and 0 otherwise. 

The \textbf{objective function} of the repair process is formulated as follows: $\min \sum\limits_{s\in S}\sum\limits_{t = t_{\alpha}}^{t_{\omega}} R_{s,t}$. The IP minimise number of starting nodes that can reach DA. A resulting objective function score of 0 mean the IP have successfully patch of the solution. Conversely, if the objective function score is greater than 0, it indicates the infeasibility of patching the cutting solution.

The IP is subject to various \textbf{constraints}. Firstly, for all $(u, v, t) \in E$ where $v\in N_b\setminus V$, we impose the constraint $R_{u, t} \geq R_{v, t+1}$. This constraint implies that if node $v$ can reach the DA when starting to traverse at time $t+1$, then we can reach the DA from $u$ when starting to traverse at time $t$. The "$\geq$" sign, rather than "$=$," accommodates cases where there is an alternate edge from $u$ to reach the DA.

Similarly, for all $v \in V$ and $t \in [t_{\alpha}, t_{\omega}]$, we have the constraint $R_{u, t} \geq R_{u, t+1}$. This indicates that if node $u$ can reach the DA when departing from this node at time $t+1$, then we can also reach the DA when departing from this node at time $t$.

Next, the blocking constraint is expressed as follows: for all $(u, v, t) \in E$ where $v\in N_b$, the constraint is $R_{u, t} \geq R_{v, t+1} - B_v$. This states that if node $v$ is decided to be blocked, then $u$ cannot reach the DA via the edge $(u, v)$.

Finally, we incorporate budget constraints: $\sum_{i \in V}\ B_{i} \leq b$ to conclude the formulation.

The complete formulation is presented as following:
\begin{subequations}
\begin{align}
\text{min} \displaystyle \sum\limits_{s\in S}\sum\limits_{t = t_{\alpha}}^{t_{\omega}} R_{s,t}  \nonumber\\ & \nonumber\\
    R_{u, t} \geq R_{v, t+1}& - B_v, & \forall (u, v, t) \in E, v \in N_b \\
    R_{u, t} \geq R_{v, t+1}&, & \forall (u, v, t) \in E, v \in V \setminus N_b \\
    R_{u, t} \geq R_{u, t+1}&, & \forall v \in V, t \in [t_{\alpha}, t_{\omega}] \\
    \sum_{i \in V}\ B_{i} \leq b,& \\
    R_{u, t}, B_{i} \in \{0, 1\}&
\end{align}
\end{subequations}

\subsection{Proof of Theorem \ref{theo:dijkstra}}

\begin{proof} To proof the correcness, we first provide the following Lemma: 
\begin{lemma}
\label{lemma:subpath}
Let a node sequence $V(\pi) = \langle x, v_{1}, v_{2}, \cdots, v_{k}  \rangle$ be the earliest-arrival path from vertex $x$ to vertex $v_{k}$ within some interval $[t_{\alpha}, t_{\omega}]$. Every prefix-subpath $V(\hat{\pi}) = \langle x, v_{1}, v_{2}, \cdots, v_{i}  \rangle \subset \pi$ where $0 < i < k$, is also an earliest-arrival path from $x$ to $v_{i}$ within $[t_{\alpha}, t_{\omega}]$.
\end{lemma}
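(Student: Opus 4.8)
The plan is to prove Lemma~\ref{lemma:subpath} by contradiction, using the ``cut-and-paste'' / optimal-substructure argument familiar from the correctness proof of Dijkstra's algorithm, adapted to temporal paths. Write $t_j$ for the label of the $j$-th edge $(v_{j-1},v_j)$ of $\pi$, so that $end(\hat\pi)=t_i+1$ and $end(\pi)=t_k+1$. Suppose $\hat\pi$ is \emph{not} an earliest-arrival path from $x$ to $v_i$ on $[t_\alpha,t_\omega]$. Then there is a temporal path $\hat\pi'$ from $x$ to $v_i$ with $start(\hat\pi')\ge t_\alpha$ and $end(\hat\pi')<t_i+1$, i.e. the last edge of $\hat\pi'$ carries a label at most $t_i-1$. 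The idea is to glue $\hat\pi'$ to the suffix $\sigma=\langle(v_i,v_{i+1},t_{i+1}),\dots,(v_{k-1},v_k,t_k)\rangle$ of $\pi$: monotonicity of the merged edge-label sequence is immediate, since the last label of $\hat\pi'$ is $\le t_i-1<t_i<t_{i+1}$ while $\sigma$ is already strictly increasing, and the merged object reaches $v_k$ with ending time $t_k+1$ and stays inside $[t_\alpha,t_\omega]$.

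The one place where care is needed is that $\hat\pi'$ may revisit a vertex that also occurs on $\sigma$, so the concatenation is a priori a temporal \emph{walk} rather than a simple temporal path. I would handle this with the standard shortcutting step: at the first repeated vertex, delete the cyclic segment between its two occurrences; this keeps the label sequence strictly increasing and can only move the arrival time earlier, so after finitely many deletions we obtain a genuine temporal path $\pi^{\star}$ from $x$ to $v_k$ within $[t_\alpha,t_\omega]$ whose prefix to $v_i$ still arrives strictly before $t_i+1$. Since $\pi$ is an earliest-arrival path, $end(\pi^{\star})=end(\pi)$, so $\pi^{\star}$ is itself an earliest-arrival path to $v_k$; this already contradicts the (implicit) minimality/greediness with which the relevant earliest-arrival path is chosen, since a strictly-earlier-arriving path to $v_i$ would have been relaxed into $seen[v_i]$ before $v_i$ was finalized with the larger value in Algorithm~\ref{alg:dijkstraea}.

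I expect the fiddly part to be exactly this bookkeeping: verifying that the shortcutting preserves both temporal-path constraints ($v_a\neq v_b$ and $t_a<t_b$ for the surviving edges), and pinning down the precise sense in which ``earliest-arrival path'' is meant — earliest arrival is an \emph{absolute arrival time}, not a duration, so the clean statement is that the canonical (greedily relaxed) earliest-arrival path has all prefixes earliest-arrival, and a strictly-earlier prefix contradicts minimality at $v_i$ rather than at $v_k$. Once the lemma is established in this form, it plugs directly into the correctness argument for Theorem~\ref{theo:dijkstra}: it guarantees that when a vertex is popped from the priority queue its recorded time is its true earliest-arrival time, exactly as in the classical Dijkstra invariant.
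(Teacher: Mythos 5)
Your construction is the right one, but the proof by contradiction does not close, and in fact cannot: the lemma as universally quantified is false. Take $V=\{x,a,d\}$ with edges $(x,a,1)$, $(x,a,3)$, $(a,d,5)$ on $[1,10]$. The path $\pi=\langle(x,a,3),(a,d,5)\rangle$ has $end(\pi)=6$, the minimum possible, so it is an earliest-arrival $(x,d)$-path; yet its prefix arrives at $a$ at time $4$ while $\langle(x,a,1)\rangle$ arrives at time $2$. Your own argument exposes why no contradiction materializes: after you glue the earlier-arriving $\hat\pi'$ onto the suffix $\sigma$ and shortcut, the resulting $\pi^{\star}$ arrives at $v_k$ at exactly $t_k+1=end(\pi)$, so the hypothesis ``$\pi$ is earliest-arrival to $v_k$'' is not violated, and the lemma assumes no minimality at $v_i$ that could be violated instead. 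Your fallback --- that an earlier arrival at $v_i$ ``would have been relaxed into $seen[v_i]$'' --- appeals to the behaviour of Algorithm~\ref{alg:dijkstraea}, but the lemma is the tool used to justify that behaviour, so this is circular. (For reference, the paper supplies no argument of its own here; it defers entirely to Lemma~6 of Wu et al., so there is no in-paper proof to match.)

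The salvage is essentially already in your write-up: what you actually prove with the gluing step is the exchange/dominance property that \emph{arriving earlier at an intermediate vertex never hurts} --- if some temporal $(x,v_i)$-path arrives by time $\tau\le t_i+1$, then concatenating it with the suffix of $\pi$ departing $v_i$ at $t_{i+1}>t_i\ge\tau-1$ and shortcutting repeated vertices yields a temporal $(x,v_k)$-path within $[t_\alpha,t_\omega]$ arriving at $end(\pi)$. This is the correct weakening of Lemma~\ref{lemma:subpath} (equivalently: \emph{some} earliest-arrival path to $v_k$ extends an earliest-arrival path to each vertex it visits), and it is all that Theorem~\ref{theo:dijkstra} needs, since it validates the Bellman-type recurrence that the popped label of each vertex is its true earliest arrival time. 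Do check the one bookkeeping point you flagged: shortcutting may excise $v_i$ entirely from $\pi^{\star}$, which is harmless for the dominance statement but means you cannot claim $\pi^{\star}$ still ``passes through $v_i$ early.'' I would restate the lemma in this existential/exchange form rather than trying to rescue the universal version.
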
 
\begin{proof} Admit proof from Lemma 6 of \cite{wu2014path}
\end{proof}

The classic Dijkstra's algorithm computing single-source shortest paths based on the observation that the prefix-subpath of the shortest path is also a shortest path. Lemma \ref{lemma:subpath} implied that the prefix-subpath of an earliest-arrival path is also an earliest-arrival paths. This proof the correctness of the use of Dijkstra greedy strategy for computing earliest-arrival path.

We assume the use of a Priority Queue to identify the minimum arrival time of unvisited nodes in the Dijkstra-based algorithm. The algorithm grow the earliest arrival path by scan through each out-bound underlying edges in underlying edge the from the current node. Eventually, vertices $v \in V$ will be added to the heap once, hence, the worst-case heap size is $|V|$. Consequently, the complexity of the extract-min operation of the priority queue is $\mathcal{O}(\log{}(|V|))$. Iteratively popping the minimum value from the priority queue takes $\mathcal{O}(|V| \cdot \log{}(|V|)$. Since each node is only extracted once and not revisited, the for loop at line 10 will visit each underlying edge $E_{\downarrow} \in G_{\downarrow}$ only once. The updated earliest arrival time for each successor requires $\mathcal{O}(1)$ for static edges $e_{s} \in E_{s}$ and $\mathcal{O}(t_{max})$ for dynamic edges $e_{s} \in E_{d}$ where $t_{max} = t_{\omega}-t_{\alpha}$. Consequently, the overall complexity of the algorithm is $\mathcal{O}(|V| \cdot \log{}(|V|) + (\varepsilon_{s} + \varepsilon_{d} \cdot t_{max}) \cdot \log{}(|V|))$. As $\varepsilon_{\downarrow} = V^2$ and $\varepsilon_{\downarrow} = \varepsilon_{s} + \varepsilon_{d}$, we can simply rewrite as $\mathcal{O}((\varepsilon_{s} + \varepsilon_{d} \cdot t_{max}) \cdot \log{}(|V|))$.
\end{proof} 

\subsection{Proof of theorem \ref{theo:staticea}}
\begin{proof} When $\varepsilon_{s} \gg \varepsilon_{d}$, we can safely assume that $\varepsilon_{d} \to 0$ to present the complexity in term of $\varepsilon_{s}$. The complexity of our Dijkstra-based algorithm can be reformulated as $\mathcal{O}(\lim_{\varepsilon_{d} \to 0}(\varepsilon_{s} + \varepsilon_{d} \cdot t_{max})\cdot \log{}(|V|)$, which simplifies to $\mathcal{O}(\varepsilon_{s} \cdot \log{}(|V|)$. Similarly,  the complexity of Wu's algorithm in the same limit is $\mathcal{O}(\lim_{\varepsilon_{d} \to 0}(\varepsilon_{s} \cdot t_{max} + \varepsilon_{d} \cdot t_{max})$, which simplifies to $\mathcal{O}(\varepsilon_{s} \cdot t_{max})$.
\end{proof}

\subsection{Supplement pseudocode for Algorithm \ref{alg:edo}}
\begin{algorithm}[H]
 \caption{EDO's Local Search}
 \label{alg:localsearch}
 \begin{algorithmic}[1]
 \renewcommand{\algorithmicrequire}{\textbf{Input:}}
 \REQUIRE Local population $P_{local}$, Evaluation path set $\phi$\\
 \renewcommand{\algorithmicrequire}{\textbf{Output:}}
 \REQUIRE Blocking population $P$  \\
 \STATE Randomly select one (or two) parent $p_1$ (or and $p_2$) from $P_{local}$
 \STATE Generate a new solution $p_{3}$ by either mutation or crossover.
 \STATE $P_{local}$ $\leftarrow$ $EDO\_reject_{local}(P_{local}, \phi, p_{3})$
 \STATE \textbf{return} $P_{local}$
 \end{algorithmic}
\end{algorithm}

\begin{algorithm}[H]
 \caption{EDO's Global Search}
 \label{alg:globalsearch}
 \begin{algorithmic}[1]
 \renewcommand{\algorithmicrequire}{\textbf{Input:}}
 \REQUIRE Global population $P_{global}$, Local population $P_{local}$
 \renewcommand{\algorithmicrequire}{\textbf{Output:}}
 \REQUIRE Blocking population $P$  \\
 \STATE \textbf{foreach} $p \in P_{local}$ \textbf{do}
 \STATE \quad $P_{global}$ $\leftarrow$ $EDO\_reject_{global}(P_{global}, p)$
 \STATE \textbf{return} $P_{global}$
 \end{algorithmic}
\end{algorithm}

\subsection{Proof of Theorem \ref{theo:converge}}

\begin{proof}
Let's us denote $\Pi(\pi)  = \{\hat{\pi} : V(\hat{\pi}) = V(\pi)\}$ is the set of path where each of the element $\hat{\pi}$ have the identical path sequence with $\pi$. We make a following observations regarding the first point of contact of $\pi$: Let's say $i\in V(\pi) \cap C$ is the first point of contact of temporal path $\pi$, then, $i$ is also the first point of contact of every temporal path $\hat{\pi} \in \Pi(\pi)$. Based on the above mentioned observation, for each time the algorithm execute line 12 to add a random path to $\phi$, the algorithm will add a temporal path that will not overlap with any node sequence of any path in $\phi$.

Let's consider an instance of temporal graph denoted as $G = (V, E)$. In this graph, we have source vertices $s\in V$ and destination vertices $d \in V$, forming the underlying graph $G_{\downarrow} = (V, E_{\downarrow})$. It is specified that $G_{\downarrow}$ contains $\mathcal{O}(|V|-2)$ (excluding the source and destination vertices) disjoint paths from $s$ to $d$. Additionally, it is assumed that there is a budget available for deploying at least $|V|-2$ honeypots. The number of budget is $|V|-2$ since it is the size of the minimal temporal cut of our instance. If $b < |V|-2$, the response time is 0, defining the best defense. To simplify our proof, we make the assumption that the algorithm adds only one path to the set $\phi$ in each global iteration. If more than one path is added to the set, the algorithm may achieve faster convergence. The algorithm continues to append new temporal paths to the set $\phi$ until no further paths remain. In the worst-case scenario, each path $\pi$ added to $\phi$ corresponds to vertices disjoint paths in the underlying graph $G_{\downarrow}$ (every paths in $phi$ are vertices disjoint with each other). Consequently, all $\mathcal{O}(|V|-2)$ paths must be incorporated into the surrogate path set $\phi$ until the Local Search's feasible solution produces a (s, d)-cut on the graph $G$, meeting the feasibility condition for Global Search. This leads to the conclusion that, at worst, we need $\mathcal{O}(|V|)$ Global Search iterations until the feasible solution of Local Search can yield a feasible solution for Global Search. It's worth noting that in the event of tie-breaking, where paths added to $\phi$ aren't disjoint, the algorithm converges faster. Blocking common vertices demands less budget, resulting in $\phi$ containing only disjoint paths as the worst-case scenario.

\end{proof}

\subsection{ILP for finding minimum temporal cut}
\cite{zschoche2020complexity} provide the complexity analysis on the minimum temporal cut problem (Strict-TS). Despite our effort in finding algorithm for Strict-TS in the literature, we have not come across any algorithm for this algorithm yet. Here, we proposed an ILP formulation to optimally solve the problem. The ILP formulation based on the idea that if node $u$ can reach the DA when departing from this node at time $t+1$, then we can also reach the DA when departing from this node at time $t$. The formulation is inspired by an ILP repair operator, with slight modifications to accommodate our problem requirements. We remove the budget constraints, and the objective function is tailored to minimize the number of budget allocations for the cut. The formulation is presented as follow:

\begin{subequations}
\begin{align}
\text{min} \displaystyle \sum\limits_{v \in N_b} B_{v}  \nonumber\\ & \nonumber\\
    R_{u, t} \geq R_{v, t+1}& - B_v, & \forall (u, v, t) \in E, v \in N_b \\
    R_{u, t} \geq R_{v, t+1}&, & \forall (u, v, t) \in E, v \in V \setminus N_b \\
    R_{u, t} \geq R_{u, t+1}&, & \forall v \in V, t \in [t_{\alpha}, t_{\omega}] \\
    R_{u, t}, B_{i} \in \{0, 1\}&
\end{align}
\end{subequations}
\end{document}
\endinput